\newtheorem{theorem}{Theorem}
\newtheorem{lemma}{Lemma}
\newtheorem{definition}{Definition}
\newtheorem{remark}{Remark}
\newtheorem{assumption}{Assumption}
\begin{document}
\title{Self-Triggered Distributed Model Predictive Control with Synchronization Parameter Interaction}
%\title{Synchronized Parameter Cooperation in Distributed Model Predictive Control: A Self-Triggered Approach}

\author{Qianqian Chen}

\maketitle

\begin{abstract}
	This paper investigates an aperiodic distributed model predictive control approach for multi-agent systems (MASs) in which  parameterized synchronization constraints is considered and  an innovative self-triggered criterion is constructed. Different from existing coordination methodology, the proposed strategy achieves the cooperation of agents through the synchronization of one-dimensional parameters related to the control inputs. At each asynchronous sampling instant, each agent exchanges the one-dimensional synchronization parameters, solves the optimal control problem (OCP) and then determines the open-loop phase. The incorporation of the self-triggered scheme and the synchronization parameter constraints relieves the computational and communication usage. Sufficient conditions guaranteeing the recursive feasibility of the OCP and the stability of the closed-loop system are proven. Simulation results illustrate the validity of the proposed control algorithm.
\end{abstract}

\begin{IEEEkeywords}
	Distributed model predictive control, synchronization parameter constraints, self-triggered control, multi-agent systems.
\end{IEEEkeywords}

\section{Introduction}

\IEEEPARstart{T}{he} thriving expansion of modern industrial sectors has lead to the application of MASs in various complex fields, such as aerospace, smart grid \cite{yang2021distributed}, and machine learning \cite{yin2020deep}. To efficiently accomplish the cooperative tasks, different control algorithms have emerged as powerful platforms for MASs. Distributed model predictive control (DMPC) \cite{christofides2013distributed, zhou2022multiple}  has been developed to address the challenges posed by strong nonlinearity, multi variables and complex uncertainties in controlled systems. This approach allows the closed-loop system to meet physical, security and economic constraints while achieving optimization performance. DMPC has been successfully applied to coordination tasks such as consensus \cite{wang2023distributed}, formation control \cite{yuan2023distributed}, trajectory tracking \cite{hao2023trajectory}, and reconfigurable control \cite{hou2021distributed, zheng2023distributed}, resulting in numerous interesting outcomes.

The traditional predictive control often requires solving the OCP repeatedly at each time step, which is computationally intensive \cite{yang2021economic}. This limitation becomes apparent when applying control strategies to the systems that have requirement of rapidity. To address this issue, event-triggered DMPC and self-triggered DMPC have emerged as popular approaches for MASs that balance resource utilization and operation performance. For example, in \cite{mi2019self}, a dynamic programming based co-design scheme is proposed for linear MASs which integrates the self-triggering scheme into DMPC. In \cite{chen2023asynchronous}, a stochastic self-triggered DMPC problem is addressed for vehicle platoon MASs under vehicular ad-hoc networks. In \cite{wei2021self}, a self-triggered min-max DMPC algorithm is proposed to achieve formation stabilization for nonlinear MASs with communication delays. In \cite{wang2024rolling}, a rolling self-triggered DMPC scheme is proposed to mitigate the mutual influences caused by dynamic couplings. The key difference between event-triggered DMPC and self-triggered DMPC lies in the approach to determine the triggering moments. Event-triggered DMPC necessitates continuous monitoring of the system's state at each instant, whereas self-triggered DMPC only requires state measurement at the trigger instants. Compared to event-triggered DMPC, self-triggered DMPC requires less state information and offers greater autonomy. This strategy allows agents to interact at specific instants determined by themselves, rather than continuously.

In the above results, the \textit{full-state}  information is interacted among neighboring agents. When the dimension of system state is substantial and communication bandwidth is limited, the full-state interaction based algorithm may be unimplementable. The ``Divide-to-Conquer'' based cooperation strategy is proposed in  \cite{skjetne2004robust, zhang2012distributed} to divide the cooperative of MASs into a geometric and dynamic task. Under this strategy, a synchronization parameter is predefined according to the specific task form. The geometric task requires that each agent follows the reference agent on the desired parameterized state trajectory. The dynamic task ensures that the dynamic behavior of reference agent can be satisfied.
When the parameters to be synchronized of reference agent converge to be identical, the desired formation behavior is consequently achieved. The advantage of this mechanism lies in its ability to decouple the path tracking and inter-vehicle coordination. The interaction of the reference agents is through only one-dimensional synchronization parameters rather than heavy full-state information. As a result, the cooperation of MASs is achieved by a coupled synchronization parameters constraint added to the OCP constructed in DMPC approach. Existing research \cite{qin2023event, qin2023asynchronous} recognize the critical role played by the parameterized constraint in the event-triggered DMPC. However, few scholars have been able to draw on any systematic research into the self-triggered DMPC involving the synchronization parameter constraints. Since the interaction based on the mere synchronization parameters between the MASs, some challenges have hindered the development of this control strategy. For instance: 1) how to construct the estimated sequence of one-dimensional synchronization parameter for neighbor agents under the event-based communication mechanism; 2) how to design preferred self-triggering conditions independent of precise state information of neighbor agents. Hence, this study provides new insights into these issues.

This paper investigates a one-dimensional synchronization parameter interaction based self-triggered DMPC algorithm for nonlinear MASs formation tracking. The overall goal of this research is to pursue the cooperation of MASs subject to bounded disturbances through less interactive information and solving frequency. The main contributions are summarized as follows. (1) Compared to the existing work, the coordination of agents is facilitated through synchronization parameter interaction by effectively estimating the relevant neighbor information.  (2) A composite self-triggered mechanism merely based on explicit self-information is proposed while an extra stability-based constraint is introduced. As a result, the transmission between the subsystem and controller is intermittent and autonomous. (3) Sufficient criteria to guarantee the recursive feasibility and stability are developed, which form the theoretical basis for the algorithm's implementation.

\emph{Notations:} Let $\mathbb{R}$ and $\mathbb{N}$ denote the reals and nonnegative integers, respectively. $\mathbb{R}_{+}$ and $\mathbb{N}_{+}$ denote the positive reals and positive integers. $\mathbb{N}_{[a,b]}=\left\{c\in \mathbb N\mid a\leq c\leq b\right\}$. For a vector $x \in \mathbb{R}^n$, $\|x\|=\sqrt{x^Tx}$ is the Euclidean norm and $\|x\|_P=\sqrt{x^TPx}$ is the $P$-weighted norm. For a matrix $A$, $A>0$ means $A$ is positive-definite symmetric. For a set $S$ and an event $E \subseteq S$, let an indicator function $\mathcal{I}\{E\}=1$ (respectively, $\mathcal{I}\{E\}=0$) if event $E$ is valid (respectively, event $E$ is not valid).

%%%%%%%%%%%%%%%%%%%%%%%%%%%%%%%%%%%%%%%%%%%%%%%%%%%%%%%%%%%%%%%%%%%%%%%%%%%%%%%%%%%%%%%%%%%%%%%%%%%%%%%%%%%%%%%%%%%%%%%%%%%%%%%%%%%%%%%%%%%%%%%%%%%%%%%%%%%%%%%%%%%%%%%%%%%%%%%%%%%%%%%%%%%%%%%%%%%%%%%%%%%%%%%%%%%%%%%%%%%%%%%%%%%%%%%%%%%%%%%%%%%%%%%%%%%%%%%%%%%%%%%%%%%
\section{Preliminaries and Problem Statements}
\subsection{Problem Formulation}
Consider a class of nonlinear MASs composed of $M$ error subsystems for tracking, where subsystem $i \in \mathcal{M} \triangleq \mathbb{N}_{[1,M]}$ is described as
\begin{eqnarray}\label{subsystem model}
    e_i(k+1) = f_i(e_i(k), u_i(k)) + d_i(k),
\end{eqnarray}
where $e_i(k) \in \mathbb{R}^{n_i}$, $u_i(k) \in \mathbb{R}^{m_i}$ and $d_i(k) \in \mathbb{R}^{n_i}$ are the error state, the control input and the external input, respectively. Assume the subsystem is subject to
\begin{eqnarray*}
    e_i(k) \in \mathbb{E}_i, \quad u_i(k) \in \mathbb{U}_i, \quad d_i(k) \in \mathbb{D}_i,
\end{eqnarray*}
where $\mathbb{E}_i \subset \mathbb{R}^{n_i}$, $\mathbb{U}_i \subset \mathbb{R}^{m_i}$, $\mathbb{D}_i = \{ d_i \in \mathbb{R}^{n_i} : \left\| d_i \right\| \leq \eta_i, \eta_i \in \mathbb{R}_{+} \}$. The linearized system of (\ref{subsystem model}) at origin is derived as $e_i(k+1) = G_i e_i(k) + H_i u_i(k)$, where $G_i = \partial f_i / \partial e_i(0,0)$, $H_i = \partial f_i / \partial u_i(0,0)$. Assume there exists matrix $K_i$ such that $G^c_i = G_i + H_i K_i$ is schur stable, and the following assumption is satisfied:
\begin{assumption}\cite{sun2019robust}\label{ass: Lipschitz}
   The nonlinear function $f_i(e_i, u_i)$ satisfying $f_i(\mathbf{0}, \mathbf{0}) = \mathbf{0}$ is locally Lipschitz continuous with respect to $e_i$. The function $g_i(e_i, u_i) = f_i(e_i, u_i) - e_i$  is $L_{g_i}$-locally Lipschitz continuous with respect to $e_i$. Furthermore, for $e_i \in \Omega_{r_i}$ ($\Omega_{r_i}$ is a terminal region defined in Assumption \ref{ass: terminal}), the function $g_i(e_i, \kappa_i(e_i))$ with local controller $\kappa_i(e_i)$  is $L_{\kappa_i}$-locally Lipschitz continuous with respect to $e_i$.
\end{assumption}

Before we present the concrete control strategy, the following definition is introduced.
\begin{definition}\label{def: ISS}
	Each subsystem in (\ref{subsystem model}) is said to admit  an ISS-Lyapunov function $V_i(e_i(k)):\mathbb E_i\mapsto \mathbb R_+$ if the following conditions hold:
	\begin{itemize}
		\item[1)] $\underline \alpha_i(\|e_i(k)\|) \leq V_i(e_i(k)) \leq \bar \alpha_i(\|e_i(k)\|)$, where $\underline \alpha_i(\cdot)$ and $\bar \alpha_i(\cdot)$ are $\mathcal{K}_{\infty}$ functions.
	\end{itemize}
	\begin{itemize}
		\item[2)] $ V_i(e_i(k+1)) - V_i(e_i(k)) \leq -\alpha_i(\|e_i(k)\|) + \delta_i(\|d_i(k)\|)$, where $\alpha_i(\cdot)$ is a $\mathcal{K}_{\infty}$ function, $\delta_i(\cdot)$ is a $\mathcal{K}$ function.
	\end{itemize}
\end{definition}

Our purpose is to design a self-triggered coordination algorithm to reduce unnecessary communication and energy consumption.  Let $\mathcal{N}_i$ denotes the neighbor set of agent $i$, then agent $i$ can receive information from agent $j$ for any $j \in \mathcal{N}_i$. Motivated by \cite{qin2023event, qin2023asynchronous}, each agent executes two assignments for formation:
\begin{enumerate}
	\item
	Geometric task: Agent $i$ is forced to converge to and follow a desired parameterized path $\Gamma_i = \{ \chi_i(k) \in \mathbb{R}^{n_i} : \exists s_i(k) \in \mathbb{R}, \;\; \text{s.t.} \;\; \chi_i(k) = d_{ri}(s_i(k)) \}$, where $\chi_i(k)$ is the position of agent $i$, $s_i(k)$ is the synchronization parameter, $d_{ri}(s_i(k))$ is the desired position. In other words, $\lim_{k \rightarrow \infty} \| \chi_i(k) - d_{ri}(s_i(k)) \| = 0$ is required.
	
	\item
	Dynamic task: Agent $i$ is forced to dynamically coordinate with  neighboring agent $j$ by minimizing the difference of synchronization parameter between $s_i(k)$ and $s_j(k)$ such that their behaviors evolve similarly. That is, $\lim_{k \rightarrow \infty} | s_i(k) - s_j(k) | = 0$ is required.
\end{enumerate}

The implementation of dynamic task is to achieve the cooperation of reference agents. Then, a tracking controller is designed to make agent $i$ converge towards the corresponding reference agent.  The overall control schematic is shown in Fig. \ref{Fig: schematic}. To realize coordination in a manner of rolling optimization, we specify the updating law of the synchronization parameter $s_i(k)$ as
\begin{eqnarray}\label{syn model}
	s_i(k+1) = h_i(s_i(k), u_i(k),  u_{ri}(k)),
\end{eqnarray}
where $h_i$ takes arguments in  synchronization parameter $s_i(k)$, control input $u_i(k)$ and reference input $u_{ri}(k)$. Without loss of generality, $h_i = s_i(k) + T\cdot[Y_i u_i(k) + Z_i u_{ri}(k)]$, where $T$ is the preset sampling period,  $Y_i=[y_{1}, ...,y_{m_i}]\in \mathbb{R}^{1 \times m_i}$ and $Z_i=[z_{1}, ...,z_{m_i}] \in \mathbb{R}^{1 \times m_i}$ are optional weight parameters.

\begin{figure*}
	\centering
	\includegraphics[width=18cm]{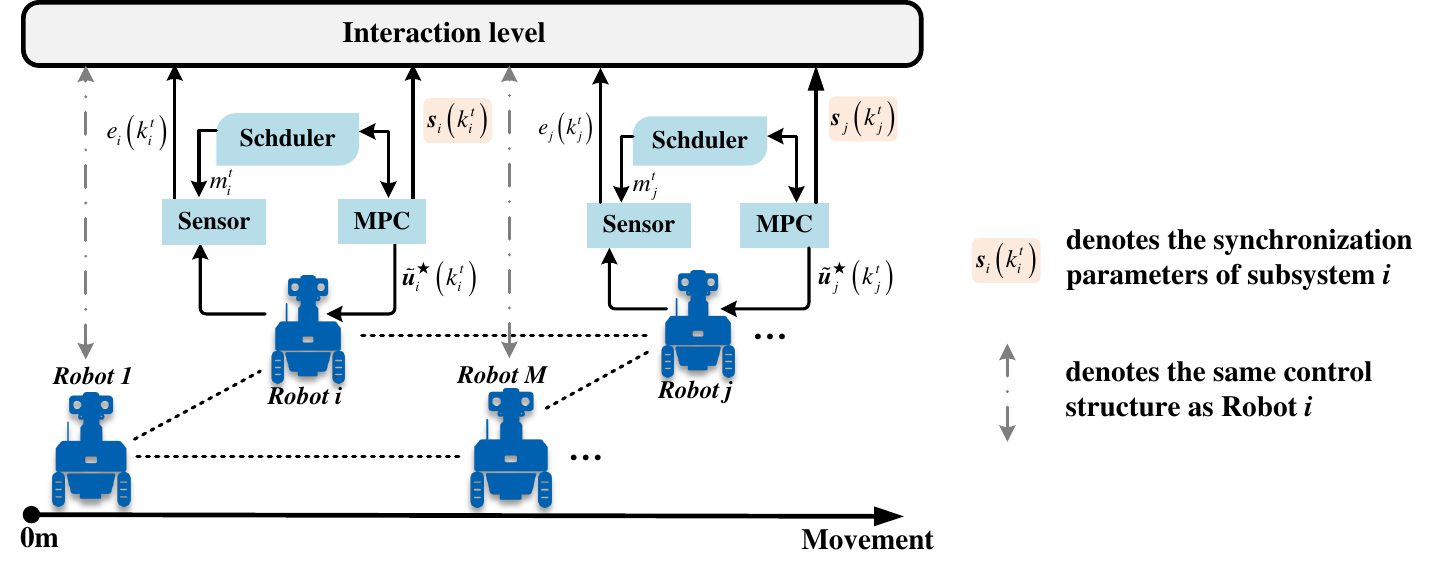}
	\caption{The Schematic of the proposed self-triggered DMPC strategy with $M$ agents.}
	\label{Fig: schematic}
\end{figure*}

%%%%%%%%%%%%%%%%%%%%%%%%%%%%%%%%%%%%%%%%%%%%%%%%%%%%%%%%%%%%%%%%%%%%%%%%%%%%%%%%%%%%%%%%%%%%%%%%%%%%%%%%%%%%%%%%%%%%%%%%%%%%%%%%%%%%%%%%%%%%%%%%%%%%%%%%%%%%%%%%%%%%%%%%%%%%%%%%%%%%%%%%%%%%%%%%%%%%%%%%%%%%%%%%%%%%%%%%%%%%%%%%%%%%%%%%%%%%%%%%%%%%%%%%%%%%%%%%%%%%%%%%%%%%%%%%%%%%%%%%%%%%%%%%%%%%%%%%%%%%%%%%%%%%%%%%%%%%%%%%%%%%%%%%
\subsection{Optimal Control Problem}
%To facilitate the optimal control problem design, we first provide the difference norm bound between the nominal state and the actual state under the $\tau$-step same control sequence.
%\begin{lemma} \cite{xie2021robust}
%	If the nominal system (\ref{subsystem nominal model}) and the actual one (\ref{subsystem model}) are controlled by the same control sequence under the same initial condition, i.e., $e_i(k) = \tilde e_i(k)$, then the state error are norm-bounded:
%	\begin{eqnarray*}
	%		\| e_i(k+\tau) - \tilde e_i(\tau|k) \|_{P_i} \leq \frac{\eta_i \bar \lambda(\sqrt{P_i})}{L_{gi}} [(1+L_{gi})^{\tau} - 1].
	%	\end{eqnarray*}
%\end{lemma}
The nominal model of subsystem $i$ is expressed as
\begin{eqnarray}\label{subsystem nominal model}
	\tilde e_i(k+1) = f_i(\tilde e_i(k), \tilde u_i(k)),
\end{eqnarray}
where $\tilde e_i(k) \in \mathbb{R}^{n_i}$ and $\tilde u_i(k) \in \mathbb{R}^{m_i}$. Let $N \in \mathbb{N}_+$ be the prediction horizon, then at event $(e_i, k)$ (i.e., for state $e_i$ at instant $k$), the cost function to be minimized is defined as
\begin{align*}
	&J_i(e_i(k), \tilde{\textbf{u}}_i(k), \textbf{s}_j(k)) \\
	&= H_i(e_i(k), \tilde{\textbf{u}}_i(k)) + \sum^{N-1}_{\tau = 0} \sum_{j\in\mathcal{N}_i} \rho_{ij} |s_i(\tau|k) - s_j(\tau|k)|^2,
\end{align*}
where
\begin{align*}
	H_i(e_i(k), \tilde{\textbf{u}}_i(k))
	&= \sum^{N-1}_{\tau = 0}(\| \tilde e_i(\tau|k)\|^2_{Q_i} + \| \tilde u_i(\tau|k)\|^2_{R_i}) \\
	&\quad + \| \tilde e_i(N|k)\|^2_{P_i},
\end{align*}
where $*(\tau|k)$ represents the predicted value of $*(k+\tau)$ at instant $k$ obtained according to the prediction model (\ref{subsystem nominal model}) for $\tau \in \mathbb{N}_{[0,N]}$. $\tilde{\textbf{u}}_i(k)= \{\tilde u_i(0|k),...,\tilde u_i(N-1|k)\}$ denotes the control sequence to be solved; $\textbf{s}_j(k) = \{s_j(0|k),...,s_j(N-1|k)\}$ denotes the sequence of synchronization parameters of agent $j$; $H_i(e_i(k), \tilde{\textbf{u}}_i(k))$ is the stage cost constructed by the current actual state $e_i(k)$ and the predicted control sequence $\tilde{\textbf{u}}_i(k)$. To guarantee the stability, matrices $Q_i >0$, $R_i > 0$ and $P_i > 0$ meet the following mild assumption:
\begin{assumption}\label{ass: terminal}
	For nominal system (\ref{subsystem nominal model}) and given matrices $Q_i >0$, $R_i > 0$, there exists a constant $\varepsilon_{ri} \in \mathbb{R}_+$, a matrix $P_i>0$, such that for $\forall \tilde e_i(k) \in \Omega_i(\varepsilon_{ri}) \triangleq \{ \tilde e_i : \| \tilde e_i(k) \|^2_{P_i} \leq \varepsilon^2_{ri}\} $, the following formulae hold: 1) $K_i \tilde e_i(k) \in \mathbb{U}_i$, $\tilde e_i(k+1) \in \Omega_i(\varepsilon_{ri})$; 2) $\| \tilde e_i(k+1) \|^2_{P_i} - \| \tilde e_i(k) \|^2_{P_i} \leq -\| \tilde e_i(k) \|^2_{Q^*_i}$, where $Q^*_i = Q_i + K^T_i R_i K_i$.
\end{assumption}

In the framework of self-triggered mechanism, each agent independently analyzes environment information and decide the open-loop phase, after which to communicate or take action as needed. Denote the $t$-th ($t \in \mathbb{N}_+$) triggering instant of agent $i$ as $k^t_i$. The optimal control problem to be solved at instant $k^t_i$ is described as
\begin{align}
	& \textbf{OCP$_i:$} \min_{\tilde u_i(\tau|k^t_i)} J_i(e_i(k^t_i), \tilde{\textbf{u}}_i(k^t_i), \hat{\textbf{s}}_{i,j}(k^t_i)) \label{OCP: cost function}\\
	& = H_i(e_i(k^t_i), \tilde{\textbf{u}}_i(k^t_i)) \nonumber \\
	&\quad + \sum^{N-1}_{\tau = 0} \sum_{j\in\mathcal{N}_i} \rho_{ij} |s_i(\tau|k^t_i) - \hat s_{i,j}(\tau|k^t_i)|^2, \nonumber \\
	& \text{subject to for} \;\; \tau \in \mathbb{N}_{[0,N-1]}  \nonumber \\
	& \tilde e_i(0|k^t_i) = e_i(k^t_i), \nonumber\\
	& \tilde e_i(\tau+1|k^t_i) = f_i(\tilde e_i(\tau|k^t_i), \tilde u_i(\tau|k^t_i)), \nonumber\\
	& s_i(\tau+1|k^t_i) = h_i(s_i(\tau|k^t_i), \tilde u_i(\tau|k^t_i), u_{ri}(k^t_i+\tau)), \nonumber\\
	& \tilde u_i(\tau|k^t_i) \in \mathbb{U}_i, \label{OCP: control}\\
	& \tilde e_i(\tau|k^t_i) \in \mathbb{E}_i \ominus \Lambda_i(\tau),  \label{OCP: state}\\
	& \tilde e_i(N|k^t_i) \in \Omega_i(\varepsilon_i),  \label{OCP: terminal}\\
	& H_i(e_i(k^t_i), \tilde{\textbf{u}}_i(k^t_i)) \leq \phi_i(k^t_i), \label{OCP: stability}
\end{align}
where $\Lambda_i(\tau) \triangleq \{ \tilde e_i: \|\tilde e_i\|_{P_i} \leq \tau \eta_i \lambda_{\text{max}} (\sqrt{P_i}) (1+L_{gi})^{\tau-1}  \}$ is the tightened constraint set as in \cite{xie2021robust}, and
\begin{align}
	\phi_i(k^t_i)
	& = - \|e_i(k^t_i - 1)\|^2_{Q_i} - \|\tilde u^{\star}_i(m^{t-1}_i - 1 | k^{t-1}_i)\|^2_{R_i} \nonumber\\
	&\quad + H_i(e_i(k^t_i-1), \bar{\textbf{u}}_i(k^t_i-1)) + \psi_i(m^{t-1}_i) \eta_i, \label{eqn: phi}
\end{align}
where $\psi_i(m^{t-1}_i)$ will be defined in Theorem \ref{theo: trigger}, and the item $H_i(e_i(k^t_i-1), \bar{\textbf{u}}_i(k^t_i-1))$ will be replaced by $H_i(e_i(k^t_i-1), \textbf{u}^{\star}_i(k^t_i-1))$ when $m^{t-1}_i=1$. Since the interaction between the agents are asynchronous, at instant $k^t_i$, agent $i$ receive the synchronization parameter information transmitted by neighbor agent $j$ at the closest moment $k^t_{i,j} \triangleq \arg \min_{k^t_{j}} k^t_{i} - k^t_{j}, \; \text{s.t.} \; k^t_{j} < k^t_{i} $. The assumed sequence $\hat{\textbf{s}}_{i,j}(k^t_i)$ introduced in (\ref{OCP: cost function}) is established based on the latest received optimal solution $\tilde{\textbf{u}}^{\star}_j(k^t_{i,j})$ at instant $k^t_{i,j}$ from neighbor $j$:
\begin{eqnarray*}
	\hat s_{i,j}(\tau|k^t_i) =
	\begin{cases}
		s^{\star}_{j}(k^t_i+\tau|k^t_{i,j}) & \text{if}\;\; \tau \in \mathbb{N}_{[0,N-k^t_i-1]}, \\
		\mu_{j} \hat s_{i,j}(\tau-1|k^t_i) & \text{if}\;\; \tau \in \mathbb{N}_{[N-k^t_i,N]},
	\end{cases}
\end{eqnarray*}
where $\mu_j \in \mathbb{R}$ is the estimated parameter based on synchronous parameters updating law (\ref{syn model}).

\begin{remark}
	The item $\sum_{j\in\mathcal{N}_i} \rho_{ij} |s_i(\tau|k^t_i) - \hat s_{i,j}(\tau|k^t_i)|^2$ in (\ref{OCP: cost function}) is called the coupled parameterized synchronization  constraint. Motivated by \cite{qin2023asynchronous, qin2023event}, the introduction of an extra variable $s_i$ can describe the information of the reference agents along the parameterized state trajectory. By constructing an error system that enables agents to track a reference agent, the desired formation can be achieved as long as the reference agent achieves synchronization. Hence, the proposed strategy renders that each agent exchanges one-dimensional parameters rather than heavy full-state to cooperate.
\end{remark}
%%%%%%%%%%%%%%%%%%%%%%%%%%%%%%%%%%%%%%%%%%%%%%%%%%%%%%%%%%%%%%%%%%%%%%%%%%%%%%%%%%%%%%%%%%%%%%%%%%%%%%%%%%%%%%%%%%%%%%%%%%%%%%%%%%%%%%%%%%%%%%%%%%%%%%%%%%%%%%%%%%%%%%%%%%%%%%%%%%%%%%%%%%%%%%%%%%%%%%%%%%%%%%%%%%%%%%%%%%%%%%%%%%%%%%%%%%%%%%%%%%%%%%%%%%%%%%%%%%%%%%%%%%%%%%%%%%%%%%%%%%%%%%%%%%%%%%%
\section{Self-triggering mechanism}
In this section, we propose the self-triggering autonomous criteria to determine the open-loop phase for each agent $i\in\mathcal{M}$. Denote the successive triggering instants as $k^t_i$ and $k^{t+1}_i$. Based on the optimal solution of the prior instant $\tilde u^{\star}_i(\tau|k^t_i)$, $\tau\in\mathbb{N}_{[0,N]}$, we can construct the candidate control input $\bar u_i(\tau|k^{t+1}_i)$ for subsystem $i$ as follows:
\begin{eqnarray}\label{eqn: feasible solution}
	\bar u_i(\tau|k^{t+1}_i) =
	\begin{cases}
		\tilde u^{\star}_i(\tau+m^t_i|k^t_i),
		 & \text{if}\;\; \tau \in \mathbb{N}_{[0, N-m^t_i-1]}, \\
		K_i \bar u_i(\tau|k^{t+1}_i),
		 & \text{if}\;\; \tau \in \mathbb{N}_{N-m^t_i, N]}.
	\end{cases}
\end{eqnarray}

The corresponding candidate state sequence can be obtained through $\bar e_i(\tau+1|k^{t+1}_i) = f_i(\bar e_i(\tau|k^{t+1}_i), \bar u_i(\tau|k^{t+1}_i))$ with initial condition $\bar e_i(0|k^{t+1}_i) = e_i(k^{t+1}_i)$. The self-triggering conditions guaranteeing stability will be derived in this section, where the essential spirit lies in making the Lyapunov function decrease. Before we present the self-triggering requisites, the following lemma is given.
\begin{lemma}\label{lem: state}
	Under Assumptions \ref{ass: Lipschitz} and \ref{ass: terminal}, if \textbf{OCP$_i$} is feasible at $k^t_i$, the resulting state from the candidate control input sequence (\ref{eqn: feasible solution}) satisfies $\|\bar e_i(N|k^{t+1}_i)\|^2_{P_i} \leq \varepsilon^2_{i}$ if the following conditions hold:
	\begin{align}
		& \eta_i \leq \frac{L_{g_i}(\varepsilon_{ri} - \varepsilon_i) }{\lambda_{\text{max}}(\sqrt{P_i})\left[(1+L_{g_i})^N-1\right]}, \\
		& \Upsilon_i(m^t_i) \leq 0, \forall m^t_i \in \mathbb{N}_{[1,N]},
	\end{align}
	where
	\begin{eqnarray*}
		\Upsilon_i(m) &\hspace{-0.6em}=\hspace{-0.6em}& \left[1-\frac{\lambda_{\text{min}}(Q^*_i)}{\lambda_{\text{max}}(P_i)}\right]^m \Big[\varepsilon_i + \frac{\eta_i \lambda_{\text{max}}(\sqrt{P_i})} {L_{g_i}} \nonumber\\
		&\hspace{-0.6em}\hspace{-0.6em}& \left[(1+L_{g_i})^N - (1+L_{g_i})^{N-m}\right] \Big]^2 - \varepsilon_i^2.
	\end{eqnarray*}
\end{lemma}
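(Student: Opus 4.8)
The plan is to control how far the candidate state sequence $\bar e_i(\cdot|k^{t+1}_i)$, produced by the nominal dynamics \eqref{subsystem nominal model} under the shifted-then-terminal input \eqref{eqn: feasible solution}, can wander from the optimal nominal prediction $\tilde e^{\star}_i(\cdot|k^t_i)$ of the previous problem, and then to harvest the contraction that Assumption~\ref{ass: terminal} supplies over the last $m^t_i$ steps of the horizon.

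First I would quantify the mismatch at the re-optimization instant. Since $k^{t+1}_i=k^t_i+m^t_i$ and, for $\tau\in\mathbb N_{[0,N-m^t_i-1]}$, the candidate input $\bar u_i(\tau|k^{t+1}_i)=\tilde u^{\star}_i(\tau+m^t_i|k^t_i)$ reuses exactly the input applied to the plant over the elapsed interval, the discrepancy between $\bar e_i(0|k^{t+1}_i)=e_i(k^{t+1}_i)$ and $\tilde e^{\star}_i(m^t_i|k^t_i)$ stems only from the disturbances $d_i$. Writing $f_i(e,u)=e+g_i(e,u)$ and using the $L_{g_i}$-Lipschitz property of $g_i$ from Assumption~\ref{ass: Lipschitz} together with $\|d_i\|\le\eta_i$, I would derive a one-step bound of the form $\|e_i(k+1)-\tilde e^{\star}_i(k+1-k^t_i|k^t_i)\|\le(1+L_{g_i})\,\|e_i(k)-\tilde e^{\star}_i(k-k^t_i|k^t_i)\|+\eta_i$ and iterate it from a zero gap to obtain $\|e_i(k^{t+1}_i)-\tilde e^{\star}_i(m^t_i|k^t_i)\|\le \eta_i\,[(1+L_{g_i})^{m^t_i}-1]/L_{g_i}$.

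Next I would propagate this gap forward. On $\tau\in\mathbb N_{[0,N-m^t_i]}$ the candidate trajectory and the shifted previous prediction are driven by identical inputs, so the same Lipschitz recursion magnifies the gap by at most $(1+L_{g_i})$ per step; evaluating at $\tau=N-m^t_i$, passing to the $P_i$-weighted norm via $\|x\|_{P_i}\le\lambda_{\text{max}}(\sqrt{P_i})\|x\|$, and invoking the terminal constraint \eqref{OCP: terminal}, namely $\|\tilde e^{\star}_i(N|k^t_i)\|_{P_i}\le\varepsilon_i$, yields $\|\bar e_i(N-m^t_i|k^{t+1}_i)\|_{P_i}\le\varepsilon_i+\frac{\eta_i\lambda_{\text{max}}(\sqrt{P_i})}{L_{g_i}}[(1+L_{g_i})^N-(1+L_{g_i})^{N-m^t_i}]$. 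The first hypothesis on $\eta_i$ is precisely the inequality that drives this bound below $\varepsilon_{ri}$ for every $m^t_i\in\mathbb N_{[1,N]}$ (the binding case being $m^t_i=N$), so $\bar e_i(N-m^t_i|k^{t+1}_i)\in\Omega_i(\varepsilon_{ri})$. From there the candidate switches to the local controller $K_i\bar e_i$: the first condition of Assumption~\ref{ass: terminal} keeps every subsequent state in $\Omega_i(\varepsilon_{ri})$ (so \eqref{OCP: control} also holds), and the second condition gives $\|\bar e_i(\tau+1|k^{t+1}_i)\|^2_{P_i}\le\|\bar e_i(\tau|k^{t+1}_i)\|^2_{P_i}-\|\bar e_i(\tau|k^{t+1}_i)\|^2_{Q^*_i}$. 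Bounding $\|x\|^2_{Q^*_i}\ge\lambda_{\text{min}}(Q^*_i)\|x\|^2\ge\frac{\lambda_{\text{min}}(Q^*_i)}{\lambda_{\text{max}}(P_i)}\|x\|^2_{P_i}$ turns each terminal step into a contraction by the factor $1-\frac{\lambda_{\text{min}}(Q^*_i)}{\lambda_{\text{max}}(P_i)}\in[0,1)$ (nonnegativity follows from $Q^*_i\le P_i$, a consequence of that same condition); iterating it over the $m^t_i$ terminal steps and substituting the previous bound gives $\|\bar e_i(N|k^{t+1}_i)\|^2_{P_i}\le\Upsilon_i(m^t_i)+\varepsilon_i^2$, so the hypothesis $\Upsilon_i(m^t_i)\le0$, $\forall m^t_i\in\mathbb N_{[1,N]}$, delivers $\|\bar e_i(N|k^{t+1}_i)\|^2_{P_i}\le\varepsilon_i^2$.

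I expect the crux to be the handover at index $N-m^t_i$: one must certify that the disturbance-drifted candidate state lies in the \emph{larger} terminal region $\Omega_i(\varepsilon_{ri})$ --- not merely in $\Omega_i(\varepsilon_i)$ --- and, crucially, \emph{uniformly} in the open-loop length $m^t_i$; obtaining this uniform estimate for the geometric sum in $m^t_i$ is what ties the lemma's two hypotheses together, while the remainder reduces to routine bookkeeping of geometric series and norm equivalences.
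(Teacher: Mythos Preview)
Your proposal is correct and follows essentially the same route as the paper's proof: accumulate the disturbance-induced gap over the $m^t_i$ open-loop steps, propagate it along the shared-input portion via the $(1+L_{g_i})$ Lipschitz recursion (the paper cites this as Gronwall's lemma), use the $\eta_i$ bound to land $\bar e_i(N-m^t_i|k^{t+1}_i)$ in $\Omega_i(\varepsilon_{ri})$, and then iterate the terminal contraction $m^t_i$ times to close with $\Upsilon_i(m^t_i)\le 0$. Your write-up is in fact slightly more explicit than the paper's (you separately derive the initial mismatch and note that $Q^*_i\le P_i$ secures the contraction factor in $[0,1)$), but the structure and the key estimates coincide.
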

%The detailed proof of Lemma \ref{lem: state} is reported in Appendix \ref{proof of lemma state}.
\begin{proof}
	If $\tau \in \mathbb{N}_{[1, N-m^t_i]}$, under the same control sequence $\{\tilde u^{\star}_i(m^t_i|k^t_i),...,\tilde u^{\star}_i(N-1|k^t_i)\}$ and using Gronwall lemma \cite{xie2021robust}, we have
	\begin{align}\label{eqn: state error, prior interval}
		&\|\bar e_i(\tau|k^{t+1}_i) - \tilde e^{\star}_i(\tau+m^t_i|k^{t+1}_i) \|_{P_i} \nonumber\\
		&\leq \frac{\eta_i \lambda_{\text{max}}(\sqrt{P_i}) }{L_{g_i}} \left[(1+L_{g_i})^{m^t_i}-1\right](1+L_{g_i})^{\tau},
	\end{align}
	let $\tau = N-m^t_i$, it yields
	\begin{align*}
		&\|\bar e_i(N-m^t_i|k^{t+1}_i)\|_{P_i} \leq \|\tilde e^{\star}_i(N|k^{t+1}_i)\|_{P_i} \\
		&+ \frac{\eta_i \lambda_{\text{max}}(\sqrt{P_i}) }{L_{g_i}} \left[(1+L_{g_i})^{N} - (1+L_{g_i})^{N-{m^t_i}}\right] \\
		&\leq \varepsilon_i + (\varepsilon_{ri} - \varepsilon_i) \frac{(1+L_{g_i})^{N} - (1+L_{g_i})^{N-{m^t_i}}}{(1+L_{g_i})^{N}-1 } \\
		&\leq \varepsilon_i + \varepsilon_{ri} - \varepsilon_i = \varepsilon_{ri}.
	\end{align*}
	
	For $\tau \in \mathbb{N}_{[N-m^t_i+1, N]}$, the controller is switched into the  local auxiliary controller $K_i \bar e_i(\tau|k^{t+1}_i)$, Assumption \ref{ass: terminal} implies $\|\bar e_i(N|k^{t+1}_i)\|^2_{P_i} - \|\bar e_i(N-1|k^{t+1}_i)\|^2_{P_i} \leq -\|\bar e_i(N-1|k^{t+1}_i)\|^2_{Q^*_i}$, then it follows that $\|\bar e_i(N|k^{t+1}_i)\|^2_{P_i} \leq \left[1-\lambda_{\text{min}}(Q^*_i)/\lambda_{\text{max}}(P_i)\right] \|\bar e_i(N-1|k^{t+1}_i)\|^2_{P_i}$. Through iteration and $\Upsilon(m)\leq0$, we obtain $\|\bar e_i(N|k^{t+1}_i)\|^2_{P_i} \leq\left[1-\lambda_{\text{min}}(Q^*_i)/\lambda_{\text{max}}(P_i)\right]^m \|\bar e_i(N-m^t_i|k^{t+1}_i)\|^2_{P_i}\leq \varepsilon_i^2$. This completes the proof.
\end{proof}

The step length of the open-loop phase $m^t_i$ is determined by the following self-triggering theorem, and the next triggering moment can be decided as $k^{t+1}_i = k^t_i + m^t_i$.

\begin{theorem}\label{theo: trigger}
	At sampling instant $k^t_i$ for agent $i$, suppose Lemma \ref{lem: state} holds. The next triggering instant is calculated as $k^{t+1}_i = k^t_i + m^t_i$, where the open-loop phase $m^t_i$ is determined by the following condition:
	\begin{eqnarray}\label{eqn: trigger condition}
		m^t_i = \mathcal{I}_i + (1-\mathcal{I}_i) \min \{ \bar m_f, \bar m_s \},	
	\end{eqnarray}
where $\mathcal{I}_i = \mathcal{I} \{\psi_i(1)  \eta_i \leq \sigma_{i} [\|e_i(k^t_i)\|^2_{Q_i} + \|\tilde u^{\star}_i(0|k^t_i)\|^2_{R_i}]\}$, $\bar m_f = \sup_{m\in\mathbb{N}_{[2,N]}} \Upsilon_i(m) \leq 0$, $\bar m_s = \sup_{m\in\mathbb{N}_{[2,N]}} \psi_i(m) \eta_i \leq \sigma_i [\lambda_{\text{min}}(Q_i) \alpha_i(m)^2 + \|\tilde u^{\star}_i(m-1|k)\|^2_{R_i}]$, $\sigma_{i} \in (0,1)$ is called the triggering performance factor, $\alpha_i(m) = \max \{0, \|e_i(m-1|k)\| - m (1+L_{g_i})^{m-1} \eta_i\}$, $\psi_i(m) = \varpi_i(m) L_{Q_i} + \nu_i(m) \tau_i(m) L_{Q^*_i} + \nu_i(m) \varsigma_i(m) L_{P_i}$,  $\varpi_i(m) = (\nu_i(m)-1) / L_{g_i}$, $\nu_i(m) = (1+L_{g_i})^{N-m}$, $\tau_i(m) = (\varsigma_i(m)-1) / L_{\kappa_i}$, $\varsigma_i(m) = (1+L_{\kappa_i})^{m-1}$.

\end{theorem}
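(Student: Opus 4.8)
\emph{Proof sketch.} The plan is to establish three facts: (i) the rule (\ref{eqn: trigger condition}) always returns an admissible $m^t_i\in\mathbb{N}_{[1,N]}$ satisfying both $\Upsilon_i(m^t_i)\le 0$ and $\psi_i(m^t_i)\eta_i\le\sigma_i[\lambda_{\min}(Q_i)\alpha_i(m^t_i)^2+\|\tilde u^{\star}_i(m^t_i-1|k^t_i)\|^2_{R_i}]$; (ii) the candidate pair built in (\ref{eqn: feasible solution}) is feasible for $\textbf{OCP}_i$ at $k^{t+1}_i=k^t_i+m^t_i$; and (iii) the guaranteed cost strictly decreases from $k^t_i$ to $k^{t+1}_i$. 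For (i): since Lemma \ref{lem: state} is assumed, $\Upsilon_i(m)\le 0$ for every $m\in\mathbb{N}_{[1,N]}$, so $\bar m_f=N$ and $\min\{\bar m_f,\bar m_s\}\ge 1$; if $\mathcal{I}_i=1$ then $m^t_i=1$ and the second inequality is exactly the event inside $\mathcal{I}_i$, while if $\mathcal{I}_i=0$ then $m^t_i=\min\{\bar m_f,\bar m_s\}\le\bar m_s$ and the inequality holds by definition of $\bar m_s$. For the easy part of (ii): the input constraint (\ref{OCP: control}) and the tightened state constraint (\ref{OCP: state}) are inherited from the optimal solution at $k^t_i$ by the standard shift argument together with Assumption \ref{ass: terminal} on the terminal segment $\tau\in\mathbb{N}_{[N-m^t_i,N]}$ where $\bar u_i=K_i\bar e_i$, and Lemma \ref{lem: state} applied with $m^t_i\le\bar m_f$ gives $\bar e_i(N|k^{t+1}_i)\in\Omega_i(\varepsilon_i)$, i.e.\ (\ref{OCP: terminal}).

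The core of (ii) is the stability constraint (\ref{OCP: stability}), namely $H_i(e_i(k^{t+1}_i),\bar{\textbf{u}}_i(k^{t+1}_i))\le\phi_i(k^{t+1}_i)$. I would compare the candidate trajectory rooted at $k^{t+1}_i$ with the (candidate, or optimal when $m^t_i=1$) trajectory rooted at $k^{t+1}_i-1$ that appears in (\ref{eqn: phi}): the two apply the same inputs up to a one-step shift, and their roots differ only by the single disturbance $d_i(k^{t+1}_i-1)$, since $e_i(k^{t+1}_i)=f_i(e_i(k^{t+1}_i-1),\tilde u^{\star}_i(m^t_i-1|k^t_i))+d_i(k^{t+1}_i-1)$ with $\|d_i(k^{t+1}_i-1)\|\le\eta_i$. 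Propagating this $\eta_i$-mismatch with Gronwall's inequality exactly as in (\ref{eqn: state error, prior interval}) — growth factor $(1+L_{g_i})$ on $\tau\in\mathbb{N}_{[0,N-m^t_i-1]}$, then $(1+L_{\kappa_i})$ on the remaining $m^t_i$ steps where $K_i$ is active — and bounding the resulting perturbations of the stage cost $\|\cdot\|^2_{Q_i}$, of the terminal-segment stage cost $\|\cdot\|^2_{Q^*_i}=\|\cdot\|^2_{Q_i}+\|K_i\cdot\|^2_{R_i}$, and of the terminal cost $\|\cdot\|^2_{P_i}$ by their local Lipschitz constants $L_{Q_i},L_{Q^*_i},L_{P_i}$, the accumulated cost error telescopes to $[\varpi_i(m^t_i)L_{Q_i}+\nu_i(m^t_i)\tau_i(m^t_i)L_{Q^*_i}+\nu_i(m^t_i)\varsigma_i(m^t_i)L_{P_i}]\eta_i=\psi_i(m^t_i)\eta_i$; here $\varpi_i,\nu_i,\tau_i,\varsigma_i$ are precisely the geometric sums and powers generated by this two-phase propagation. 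Using Assumption \ref{ass: terminal} to absorb the last terminal-segment stage cost into the drop of $\|\cdot\|^2_{P_i}$ at the horizon end, and identifying the surviving terms with (\ref{eqn: phi}), delivers $H_i(e_i(k^{t+1}_i),\bar{\textbf{u}}_i(k^{t+1}_i))\le\phi_i(k^{t+1}_i)$; together with the previous paragraph, the candidate is feasible and $\textbf{OCP}_i$ is recursively feasible.

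For (iii): at the decision instant $k^t_i$ the future state $e_i(k^{t+1}_i-1)=e_i(k^t_i+m^t_i-1)$ is unknown, but it deviates from the predicted $e_i(m^t_i-1|k^t_i)$ by at most $\eta_i\sum_{l=0}^{m^t_i-2}(1+L_{g_i})^l\le m^t_i(1+L_{g_i})^{m^t_i-1}\eta_i$, whence $\|e_i(k^{t+1}_i-1)\|\ge\alpha_i(m^t_i)$ and $\lambda_{\min}(Q_i)\alpha_i(m^t_i)^2\le\|e_i(k^{t+1}_i-1)\|^2_{Q_i}$. Combining this with fact (i) gives $\psi_i(m^t_i)\eta_i\le\sigma_i[\|e_i(k^{t+1}_i-1)\|^2_{Q_i}+\|\tilde u^{\star}_i(m^t_i-1|k^t_i)\|^2_{R_i}]$, so (\ref{eqn: phi}) yields $\phi_i(k^{t+1}_i)\le H_i(e_i(k^{t+1}_i-1),\bar{\textbf{u}}_i(k^{t+1}_i-1))-(1-\sigma_i)[\|e_i(k^{t+1}_i-1)\|^2_{Q_i}+\|\tilde u^{\star}_i(m^t_i-1|k^t_i)\|^2_{R_i}]$ with $1-\sigma_i>0$; the candidate cost therefore drops strictly relative to one step earlier, and iterating this along the open-loop phase down to $k^t_i$ bounds the optimal cost at $k^{t+1}_i$ strictly below that at $k^t_i$, supplying the Lyapunov decrease over the triggering interval.

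The step I expect to be the main obstacle is the bookkeeping in the second paragraph: tracking how the single $\eta_i$-mismatch is amplified through the switch from the shifted optimal inputs (constant $L_{g_i}$) to the terminal feedback $K_i$ (constant $L_{\kappa_i}$), so that the three Lipschitz contributions sum to exactly $\psi_i(m^t_i)\eta_i$, and invoking the terminal-cost inequality of Assumption \ref{ass: terminal} on the correct state so that one terminal-segment stage cost disappears — the difficulty is entirely in carrying the constants, not in any new idea.
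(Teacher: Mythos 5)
Your proposal takes essentially the same route as the paper's own proof: compare consecutive instants within the open-loop phase, propagate the single $\eta_i$-mismatch through the shifted-input segment (factor $1+L_{g_i}$) and the terminal-feedback segment (factor $1+L_{\kappa_i}$) via the Gronwall lemma, bound the resulting cost perturbations by $L_{Q_i}$, $L_{Q^*_i}$, $L_{P_i}$ so they sum to $\psi_i(m)\eta_i$, lower-bound the unknown future state by $\alpha_i(m)$, and invoke the $\mathcal{I}_i$ and $\bar m_s$ conditions to obtain the strict $(\sigma_i-1)$-type decrease. The only deviation is that your second paragraph (feasibility of the shifted candidate for \textbf{OCP$_i$}, including constraint (\ref{OCP: stability})) reproduces material the paper defers to Theorem \ref{theo: feasibility}; it is consistent with, but not needed for, this theorem's argument.
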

\begin{proof}
	Suppose the current sampling instant $k^t_i$, abbreviated as $k$ in the proof for brevity, deduces the optimal control input $\tilde u^{\star}_i(\tau|k)$ by minimizing $J_i(k)$ as in (\ref{OCP: cost function}). The Lyapunov function candidate is chosen as $V_i(e_i(k), \tilde{\textbf{u}}_i(k)) \triangleq H_i(e_i(k), \tilde{\textbf{u}}_i(k))$. First, we evaluate the difference of the Lyapunov function between instants $k$ and $k+1$ (i.e., $m^t_i=1$):	
	\begin{eqnarray}
		&\hspace{-0.6em}\hspace{-0.6em}& V_i( e_i(k+1), \bar{\textbf{u}}_i(k+1)) - V_i(e_i(k), \tilde{\textbf{u}}^{\star}_i(k)) \nonumber\\
		&\hspace{-0.6em}\hspace{-0.6em}& \leq -\|e_i(k)\|^2_{Q_i} - \|\tilde u^{\star}_i(0|k)\|^2_{R_i} + \Theta_1 + \Theta_2, \label{eqn: diff H=1}
	\end{eqnarray}
	where $\Theta_1 = \sum_{\tau=0}^{N-2} \|\bar e_i(\tau|k+1) \|^2_{Q_i} - \|\tilde e^{\star}_i(\tau+1|k) \|^2_{Q_i}$, $\Theta_2 = \|\bar e_i(N-1|k+1) \|^2_{Q_i} + \|\bar u_i(N-1|k+1) \|^2_{R_i} + \|\bar e_i(N|k+1) \|^2_{P_i} - \|\tilde e^{\star}_i(N|k) \|^2_{P_i}$. According to \cite{paula2006analysis} and \cite{wang2024rolling}, there always
	exist a constant $L_{Q_i}$ such that $\|e_1\|^2_{Q_i} - \|e_2\|^2_{Q_i} \leq L_{Q_i} \|e_1 - e_2\|_{Q_i}$ and a constant $L_{P_i}$ analogously, for $\forall e_1, e_2 \in \mathbb{E}_i$. Based on Gronwall Lemma, we have the following relationships:
	\begin{eqnarray}
		\Theta_1
		&\hspace{-0.6em}\leq\hspace{-0.6em}& L_{Q_i} \sum_{\tau=0}^{N-2} \|\bar e_i(\tau|k+1) - \tilde e^{\star}_i(\tau+1|k) \| \nonumber\\
		&\hspace{-0.6em}\leq\hspace{-0.6em}& L_{Q_i} \sum_{\tau=0}^{N-2} \Big[ \|\bar e_i(k+1) - \tilde e^{\star}_i(1|k)\| + \sum_{s=0}^{\tau-1} \| g_i(\bar e_i(s|k \nonumber\\
		&\hspace{-0.6em}\hspace{-0.6em}& +1), \tilde u^{\star}_i(s+1|k)) - g_i(\tilde e^{\star}_i(s+1|k),\tilde u^{\star}_i(s+1|k)) \|\Big] \nonumber\\
		&\hspace{-0.6em}\leq\hspace{-0.6em}& L_{Q_i} \sum_{\tau=0}^{N-2} (1+L_{g_i})^\tau \eta_i = \varpi_i(1) L_{Q_i} \eta_i. \label{eqn: Theta1}
	\end{eqnarray}
	
	Recall that Lemma \ref{lem: state} implies $\bar e_i(N-m^t_i|k^{t}_i + m^t_i) \in \Omega_i(\varepsilon_{ri})$ for $\forall m^t_i \in \mathbb{N}_{[1,N]}$, and by virtue of Assumption \ref{ass: terminal}, we have
	\begin{eqnarray}
		\Theta_2
		&\hspace{-0.6em}\leq\hspace{-0.6em}& \|\bar e_i(N-1|k+1) \|^2_{P_i} - \|\tilde e^{\star}_i(N|k) \|^2_{P_i} \nonumber\\
		&\hspace{-0.6em}\leq\hspace{-0.6em}& L_{P_i} \|\bar e_i(N-1|k+1) - \|\tilde e^{\star}_i(N|k) \|_{P_i} \nonumber\\
		&\hspace{-0.6em}\leq\hspace{-0.6em}& \nu_i(1) L_{P_i} \eta_i. \label{eqn: Theta2}
	\end{eqnarray}
	
	By substituting (\ref{eqn: Theta1}) and (\ref{eqn: Theta2}) into (\ref{eqn: diff H=1}) and using the triggering condition (\ref{theo: trigger}), one has
	\begin{eqnarray}
		&\hspace{-0.6em}\hspace{-0.6em}& V_i( e_i(k+1), \bar{\textbf{u}}_i(k+1)) - V_i(e_i(k), \tilde{\textbf{u}}^{\star}_i(k)) \nonumber\\
		&\hspace{-0.6em}\hspace{-0.6em}& \leq -\|e_i(k)\|^2_{Q_i} - \|\tilde u^{\star}_i(0|k)\|^2_{R_i}) + \psi_i(1)\eta_i \nonumber\\
		&\hspace{-0.6em}\hspace{-0.6em}& \leq (\sigma_{i}-1) (\|e_i(k)\|^2_{Q_i} + \|\tilde u^{\star}_i(0|k)\|^2_{R_i}) < 0, \label{eqn: diff H=1 ST}
	\end{eqnarray}
	which implies the convergence property of the stage cost and the optimal control problem (\ref{OCP: cost function}) is determined not to be solved at instant $k+1$. Then, we proceed to check the cost difference between instant  $k+\underline{m}\;(\underline{m} = m-1)$ and $k+m$, for $\forall m \geq 2$:
	\begin{align}
		& V_i(e_i(k+m),\bar{\textbf{u}}_i(k+m)) - V_i(e_i(k+\underline{m}),\bar{\textbf{u}}_i(k+\underline{m})) \nonumber\\
		&\leq -\|e_i(k+\underline{m})\|^2_{Q_i} \!-\! \|\tilde u^{\star}_i(\underline{m}|k)\|^2_{R_i} + \Delta_1 + \Delta_2 + \Delta_3, \label{eqn: diff H>1}
	\end{align}
	with $\Delta_1 = \sum_{\tau=0}^{N-m-1} \|\bar e_i(\tau|k+m) \|^2_{Q_i} - \|\bar e_i(\tau+1|k+\underline{m}) \|^2_{Q_i}$, $\Delta_2 = \sum_{\tau=N-m}^{N-2} \|\bar e_i(\tau|k+m) \|^2_{Q^*_i} - \|\bar e_i(\tau+1|k+\underline{m}) \|^2_{Q^*_i}$, $\Delta_3 = \|\bar e_i(N-1|k+m) \|^2_{Q_i} + \|\bar u_i(N-1|k+m) \|^2_{R_i} + \|\bar e_i(N|k+m) \|^2_{P_i} - \|\bar e_i(N|k+\underline{m}) \|^2_{P_i}$. It holds that
	\begin{align}
		\Delta_1
		&\leq  L_{Q_i} \sum_{\tau=0}^{N-m-1} \|\bar e_i(\tau|k+m) - \bar e_i(\tau+1|k+\underline{m}) \| \nonumber\\
		&\leq L_{Q_i} \sum_{\tau=0}^{N-m-1} \Big[ \|\bar e_i(k+m) - \bar e_i(1|k+\underline{m})\| \nonumber\\
		&\quad + \sum_{s=0}^{\tau-1} \| g_i(\bar e_i(s|k+m), \tilde u^{\star}_i(s+m|k)) \nonumber\\
		&\quad - g_i(\bar e_i(s+1|k+\underline{m}), \tilde u^{\star}_i(s+1+\underline{m}|k)) \|\Big] \nonumber\\
		&\leq L_{Q_i} \sum_{\tau=0}^{N-m-1} (1+L_{g_i})^\tau \eta_i = \varpi_i(m) L_{Q_i} \eta_i. \label{eqn: Delta1}
	\end{align}
	
	Specifically, we get $\|\bar e_i(N-m|k+m) - \bar e_i(N-m|k+\underline{m}) \| \leq \nu_i(m )\eta_i$, then $\Delta_2$ can be rewritten as
	\begin{align}
		\Delta_2
		& \leq L_{Q^*_i} \sum_{\tau=N-m}^{N-2} \|\bar e_i(\tau|k+m) - \bar e_i(\tau+1|k+\underline{m}) \| \nonumber\\
		& = L_{Q^*_i} \sum_{\tau=0}^{m-2} \|\bar e_i(\tau-m+N|k+m) \nonumber\\
		&\qquad\qquad\quad\;\; - \bar e_i(\tau+1-m+N|k+\underline{m}) \| \nonumber\\
		& \leq L_{Q^*_i} \nu_i(m) \sum_{\tau=0}^{m-2}  (1+L_{\kappa_i})^\tau \eta_i = \nu_i(m) \tau_i(m) L_{Q^*_i} \eta_i. \label{eqn: Delta2}
	\end{align}
	
	Recall Assumption (\ref{ass: terminal}), one has
	\begin{align}
		\Delta_3
		& \leq L_{P_i} \|\bar e_i(N-1|k+m) - \bar e_i(N|k+\underline{m}) \| \nonumber\\
		& \leq \nu_i(m) \varsigma_i(m) L_{P_i} \eta_i. \label{eqn: Delta3}
	\end{align}
	
	Nonetheless, the actual state $e_i(k+\underline{m})$ is unknown at instant $k$, since we have $\|e_i(\underline{m}|k)\|-\|e_i(k+\underline{m})\| \leq \|e_i(\underline{m}|k)-e_i(k+\underline{m})\| \leq m (1+L_{g_i})^{\underline{m}} \eta_i$. It follows that $-\|e_i(k+\underline{m})\| \leq -\alpha_i(m)$. Recall the triggering condition $\bar m_s$, combining (\ref{eqn: Delta1}), (\ref{eqn: Delta2}), (\ref{eqn: Delta3}) with (\ref{eqn: diff H>1}) yields that
	\begin{align}
		& V_i(e_i(k+m),\bar{\textbf{u}}_i(k+m)) - V_i(e_i(k+\underline{m}),\bar{\textbf{u}}_i(k+\underline{m})) \nonumber\\
		& \leq -\|e_i(k+\underline{m})\|^2_{Q_i} - \|\tilde u^{\star}_i(\underline{m}|k)\|^2_{R_i} + \psi_i(m) \eta_i \nonumber\\
		&\leq -\lambda_{\text{min}}(Q_i) \alpha_i(m)^2 - \|\tilde u^{\star}_i(\underline{m}|k)\|^2_{R_i} + \psi_i(m) \eta_i \nonumber\\
		&\leq (\sigma_i-1) [\lambda_{\text{min}}(Q_i) \alpha_i(m)^2 + \|\tilde u^{\star}_i(\underline{m}|k)\|^2_{R_i}] < 0. \label{eqn: diff H>1 ST}
	\end{align}
	
	The maximal admissible open-loop chase preserving convergence can be obtained.
	This completes the proof.
\end{proof}

Summarizing the above analysis, the pseudocode of the self-triggering cooperation dual-model strategy is generalized as in Algorithm 1.
\begin{algorithm}
	\caption{Self-Triggered Parameterized DMPC}\label{alg1}
	\begin{algorithmic}[1]
			
			\State
			\textbf{Initialization.} For each agent $i \in \mathcal{M}$, the prediction horizon $N$, the weighted matrix $Q_i$, $R_i$, $P_i$ the weighted factor $\rho_{ij}$, the auxiliary gain $K_i$, the justified terminal region parameters $\varepsilon_{ri}$, $\varepsilon_i$, the triggering factor $\sigma_i$.
			
%			\State
%			Agent $i$ generate its reference path following the baseline path $\Gamma_0(\vartheta_0)$ and the offset vector $l_{i}$;
			%		\begin{eqnarray*}
					%			\Gamma_i(\vartheta_i(k)) =
					%			\begin{bmatrix}
							%				X_{r0} + R^I_F() l_i(x_{ri}(\vartheta_i ), y_{ri}(\vartheta_i )) \\
							%				\arctan				
							%			\end{bmatrix}
					%		\end{eqnarray*}
			
			\State
			At sampling instant $k^t_i$, agent $i$ samples the actual error state $e_{i}(k^t_i)$ and parameter $s_i(k^t_i)$.
			
			\If{$e_{i}(k^t_i) \in \Omega_{r_i}$}
			\State
			Apply the auxiliary control law $u_{i}(k^t_i) = K_i e_{i}(k^t_i)$.
			
			\Else
			\State
			Construct the predicted synchronization parameter sequence $ \hat{\textbf{s}}_{i,j}(k^t_i)$ based on the most recent received values $\textbf{s}^{\star}_j(k^t_{i,j})$.
		    Solve the OCP$_i$ to obtain $\tilde{\mathbf{u}}^{\star}_{i}(k^t_i)$ and $\mathbf{s}^{\star}_{i}(k^t_i)$;
			
			\State
			Calculate the open-loop phase $m^t_i$ and the next triggering instant $m^{t+1}_i = k^t_i + m^t_i$;
			
			\State
			Broadcast $\textbf{s}^{\star}_i(k^t_i)$ to subsystem $j \in \mathcal{N}_i$;
			
			\State
			Apply the first $m^t_i$  control action in $\tilde{\mathbf{u}}^{\star}_{i}(k^t_i)$;
			
			\EndIf
			\State
			Update the sampling instant $k^{t+1}_i \to  k^t_i$, and go to Step 2 if the formation keeps moving on.
			
		\end{algorithmic}
\end{algorithm}

\begin{remark}
	It is worth mentioning that the proposed self-triggering asynchronous generator (\ref{eqn: trigger condition}) differs from the existing strategy, for examples, in \cite{chen2023asynchronous, mi2019self, wei2021self}. In our work, the neighbor information is not considered in the design of the trigger condition. Although the constraint (\ref{OCP: stability}) is introduced to guarantee the stability of the closed-loop system, the conservativeness and autonomy of the trigger policy can be efficiently reduced.
\end{remark}

%%%%%%%%%%%%%%%%%%%%%%%%%%%%%%%%%%%%%%%%%%%%%%%%%%%%%%%%%%%%%%%%%%%%%%%%%%%%%%%%%%%%%%%%%%%%%%%%%%%%%%%%%%%%%%%%%%
%%%%%%%%%%%%%%%%%%%%%%%%%%%%%%%%%%%%%%%%%%%%%%%%%%%%%%%%%%%%%%%%%%%%%%%%%%%%%%%%%%%%%%%%%%%%%%%%%%%%%%%%%%%%%%%%%%
\section{Feasibility and Stability Analysis}
In this section, the recursive feasibility of the proposed Algorithm 1 and the stability of closed-loop system are analyzed successively. The following theorem illustrates the recursive feasibility under some mild conditions.
\begin{theorem}\label{theo: feasibility}
	Suppose that Lemma \ref{lem: state} is satisfied and the open-loop phase for each agent $i$ is determined as in (\ref{eqn: trigger condition}), then \textbf{OCP$_i$} is recursively feasible if the \textbf{OCP$_i$} is feasible at the initial time $k^0_i$ and $\Omega(\varepsilon_{ri}) \subseteq \mathbb{E}_i \ominus \Lambda_i(N)$ under Algorithm 1.
\end{theorem}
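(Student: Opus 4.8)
The plan is to argue by induction on the triggering index $t$, using the shifted‑plus‑terminal candidate input (\ref{eqn: feasible solution}) as the feasibility witness at each new triggering instant. The base case is the standing hypothesis that $\textbf{OCP}_i$ is feasible at $k^0_i$ (if $e_i(k^0_i)\in\Omega_{r_i}$ the agent applies $u_i=K_ie_i$ and solves no OCP, so nothing is required there). For the inductive step, suppose $\textbf{OCP}_i$ admits the optimal solution $\tilde{\textbf{u}}^{\star}_i(k^t_i)$ at $k^t_i$ and set $k^{t+1}_i=k^t_i+m^t_i$ with $m^t_i$ from (\ref{eqn: trigger condition}). I would show that the candidate $\bar{\textbf{u}}_i(k^{t+1}_i)$ of (\ref{eqn: feasible solution}), together with the nominal trajectory $\bar e_i(\cdot|k^{t+1}_i)$ seeded at the measured state $\bar e_i(0|k^{t+1}_i)=e_i(k^{t+1}_i)$, satisfies every constraint of $\textbf{OCP}_i$ at $k^{t+1}_i$, which closes the induction.

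The constraints are verified one family at a time. The terminal constraint (\ref{OCP: terminal}) is exactly the conclusion of Lemma \ref{lem: state}. The input constraint (\ref{OCP: control}) splits into $\tau\in\mathbb{N}_{[0,N-m^t_i-1]}$, where the candidate equals the previously optimal value $\tilde u^{\star}_i(\tau+m^t_i|k^t_i)\in\mathbb{U}_i$, and $\tau\in\mathbb{N}_{[N-m^t_i,N-1]}$, where it equals $K_i\bar e_i(\tau|k^{t+1}_i)$ and the intermediate estimates in the proof of Lemma \ref{lem: state} put $\bar e_i(\tau|k^{t+1}_i)$ in $\Omega_i(\varepsilon_{ri})$, so $K_i\bar e_i(\tau|k^{t+1}_i)\in\mathbb{U}_i$ by Assumption \ref{ass: terminal}. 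The stability constraint (\ref{OCP: stability}) requires no new estimate: the one‑step cost‑decrease relations already obtained in the proof of Theorem \ref{theo: trigger} — (\ref{eqn: diff H=1 ST}) for $m^t_i=1$ and (\ref{eqn: diff H>1 ST}) with $m=m^t_i$, $\underline m=m^t_i-1$, $k=k^t_i$ for $m^t_i\ge2$ — together with $V_i(\cdot)\equiv H_i(\cdot)$ give $H_i(e_i(k^{t+1}_i),\bar{\textbf{u}}_i(k^{t+1}_i))\le H_i(e_i(k^{t+1}_i-1),\bar{\textbf{u}}_i(k^{t+1}_i-1))-\|e_i(k^{t+1}_i-1)\|^2_{Q_i}-\|\tilde u^{\star}_i(m^t_i-1|k^t_i)\|^2_{R_i}+\psi_i(m^t_i)\eta_i$, whose right‑hand side is precisely $\phi_i(k^{t+1}_i)$ of (\ref{eqn: phi}) (with the substitution stated after (\ref{eqn: phi}) when $m^t_i=1$).

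The heart of the argument — and the only place where the disturbance magnitude $\eta_i$ has to be absorbed — is the tightened state constraint (\ref{OCP: state}), $\bar e_i(\tau|k^{t+1}_i)\in\mathbb{E}_i\ominus\Lambda_i(\tau)$. For $\tau\in\mathbb{N}_{[N-m^t_i,N-1]}$ I would again use Lemma \ref{lem: state} to place $\bar e_i(\tau|k^{t+1}_i)$ in $\Omega_i(\varepsilon_{ri})$ and then invoke the hypothesis $\Omega_i(\varepsilon_{ri})\subseteq\mathbb{E}_i\ominus\Lambda_i(N)$ together with the monotonicity $\Lambda_i(\tau)\subseteq\Lambda_i(N)$ (its radius $\tau\eta_i\lambda_{\text{max}}(\sqrt{P_i})(1+L_{g_i})^{\tau-1}$ is nondecreasing in $\tau\le N$). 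For $\tau\in\mathbb{N}_{[0,N-m^t_i-1]}$ I would combine the Gronwall estimate (\ref{eqn: state error, prior interval}) of Lemma \ref{lem: state}, namely $\|\bar e_i(\tau|k^{t+1}_i)-\tilde e^{\star}_i(\tau+m^t_i|k^t_i)\|_{P_i}\le\frac{\eta_i\lambda_{\text{max}}(\sqrt{P_i})}{L_{g_i}}[(1+L_{g_i})^{m^t_i}-1](1+L_{g_i})^{\tau}$, with the feasibility at $k^t_i$ of $\tilde e^{\star}_i(\tau+m^t_i|k^t_i)\in\mathbb{E}_i\ominus\Lambda_i(\tau+m^t_i)$. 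Since $\Lambda_i(\cdot)$ and the deviation set are both $P_i$‑weighted balls, the desired inclusion reduces to the scalar comparison of radii $\frac{(1+L_{g_i})^{m^t_i}-1}{L_{g_i}}(1+L_{g_i})^{\tau}+\tau(1+L_{g_i})^{\tau-1}\le(\tau+m^t_i)(1+L_{g_i})^{\tau+m^t_i-1}$, which holds because the first summand equals $\sum_{j=\tau}^{\tau+m^t_i-1}(1+L_{g_i})^{j}\le m^t_i(1+L_{g_i})^{\tau+m^t_i-1}$ while the second is $\le\tau(1+L_{g_i})^{\tau+m^t_i-1}$; hence the Minkowski sum of the deviation ball with $\Lambda_i(\tau)$ is contained in $\Lambda_i(\tau+m^t_i)$, and $\bar e_i(\tau|k^{t+1}_i)\in\mathbb{E}_i\ominus\Lambda_i(\tau)$ follows.

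Collecting the four verifications, $\bar{\textbf{u}}_i(k^{t+1}_i)$ is feasible for $\textbf{OCP}_i$ at $k^{t+1}_i$, so feasibility propagates from $k^t_i$ to $k^{t+1}_i$; because $m^t_i\in\mathbb{N}_{[1,N]}$ is always finite and positive, the induction reaches every triggering instant, and $\textbf{OCP}_i$ is recursively feasible for each $i\in\mathcal{M}$ under Algorithm 1 whenever it is feasible at $k^0_i$ and $\Omega_i(\varepsilon_{ri})\subseteq\mathbb{E}_i\ominus\Lambda_i(N)$. The only genuinely delicate step is the third paragraph, where the growth of the disturbance‑induced mismatch along the receding horizon must be dominated by the relaxation of the constraint‑tightening set; everything else is bookkeeping on top of Lemmas/Theorems already established.
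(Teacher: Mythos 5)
Your proof is correct and follows essentially the same route as the paper's: take the shifted candidate input (\ref{eqn: feasible solution}) at $k^{t+1}_i=k^t_i+m^t_i$ and verify constraints (\ref{OCP: control})--(\ref{OCP: stability}) one by one, using Lemma \ref{lem: state} for the terminal set, the Gronwall estimate (\ref{eqn: state error, prior interval}) plus the hypothesis $\Omega(\varepsilon_{ri})\subseteq\mathbb{E}_i\ominus\Lambda_i(N)$ for the tightened state constraint, and the cost-decrease relation (\ref{eqn: diff H>1 ST}) rewritten at $k^{t+1}_i-1$ to recover $\phi_i(k^{t+1}_i)$ for (\ref{OCP: stability}). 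Your explicit scalar radius comparison showing that the deviation ball Minkowski-added to $\Lambda_i(\tau)$ sits inside $\Lambda_i(\tau+m^t_i)$ merely supplies a detail the paper's proof asserts without computation.
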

\begin{proof}
	The control constraint (\ref{OCP: control}) trivially holds based on the assumption of feasibility at instant $k^t_i$ and Assumption \ref{ass: terminal}. With regard to the state constraint (\ref{OCP: state}), for the interval $\tau \in \mathbb{N}_{[1, N-m^t_i]}$, from the relationship (\ref{eqn: state error, prior interval}), we have $\bar e_i(\tau|k^{t+1}_i) \in \mathbb{E}_i \ominus \Lambda_i(\tau+m^t_i) \oplus \frac{\eta_i \lambda_{\text{max}}(\sqrt{P_i}) }{L_{g_i}} \left[(1+L_{g_i})^{m^t_i}-1\right](1+L_{g_i})^{\tau}  \in \mathbb{E}_i \ominus \Lambda_i(\tau)$; for $\tau \in \mathbb{N}_{[N-m^t_i+1, N]}$, the error state is forced into the terminal region $\Omega(\varepsilon_{ri})$, $\bar e_i(\tau|k^{t+1}_i) \in \Omega(\varepsilon_{ri}) \subseteq \mathbb{E}_i \ominus \Lambda_i(N) \subseteq \mathbb{E}_i \ominus \Lambda_i(\tau)$ can be clearly obtained. Lemma \ref{lem: state} shows the satisfaction of terminal constraint (\ref{OCP: terminal}). Finally, we turn to the proof of the constraint (\ref{OCP: stability}). Replacing $k^t_i + \underline{m}$ of (\ref{eqn: diff H>1 ST}) with $k^{t+1}-1$ yields
	\begin{align*}
		& H_i(e_i(k^{t+1}_i), \bar{\textbf{u}}_i(k^{t+1}_i)) \\
		& \leq -\|e_i(k^{t+1}_i-1)\|^2_{Q_i} - \|\tilde u^{\star}_i(m^t_i-1|k^t_i)\|^2_{R_i} \\
		&\quad + H_i(e_i(k^{t+1}_i-1), \bar{\textbf{u}}_i(k^{t+1}_i-1))  + \psi_i(m^t_i)\eta_i \\
		& = \phi_i(k^{t+1}_i),
	\end{align*}
	which implies the satisfaction of the constraint (\ref{OCP: stability}). This completes the proof.
\end{proof}

The core is to prove that the candidate control sequence (\ref{eqn: feasible solution}) meets constraints (\ref{OCP: control})-(\ref{OCP: stability}).
%The detailed proof of Theorem \ref{theo: feasibility} is reported in Appendix \ref{proof of theorem feasibility}.
We also give the following theorem concerning the stability.

\begin{theorem}\label{theo: stability}
	Suppose that Theorem \ref{theo: trigger} and Theorem \ref{theo: feasibility} hold. For any initial state belongs to the set $\mathbb{E}_i\setminus\Omega_i(\varepsilon_{ri})$, the $i$-th closed-loop system is ISS and the state converges into $\Omega_i(\varepsilon_{ri})$ within a  finite time under the proposed Algorithm 1.
\end{theorem}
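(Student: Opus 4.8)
The plan is to establish ISS of the $i$-th closed-loop system by exhibiting $V_i(e_i(k^t_i),\tilde{\mathbf u}^\star_i(k^t_i)) = H_i(e_i(k^t_i),\tilde{\mathbf u}^\star_i(k^t_i))$ as an ISS-Lyapunov function in the sense of Definition \ref{def: ISS}, evaluated along the triggering sequence $\{k^t_i\}$, and then to upgrade the per-triggering-instant decrease into convergence into $\Omega_i(\varepsilon_{ri})$ in finite time. First I would verify the sandwich bound (condition 1 of Definition \ref{def: ISS}): since $V_i = H_i \ge \|e_i(k^t_i)\|^2_{Q_i} \ge \lambda_{\min}(Q_i)\|e_i(k^t_i)\|^2$, the lower $\mathcal K_\infty$ bound $\underline\alpha_i$ is immediate; for the upper bound $\bar\alpha_i$ I would use feasibility of \textbf{OCP}$_i$ together with the terminal constraint (\ref{OCP: terminal}), Assumption \ref{ass: terminal}, and the Lipschitz/Gronwall estimates already used in Lemma \ref{lem: state} to bound the optimal cost by a $\mathcal K_\infty$ function of $\|e_i(k^t_i)\|$ on the compact feasible region. (This last step is where a mild additional continuity-of-value-function argument is needed, and it is the part I would expect to require the most care.)

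Next I would establish the decrease condition (condition 2). The key observation is that Theorem \ref{theo: trigger} — specifically the stability branch encoded by $\bar m_s$ and the indicator $\mathcal I_i$ — together with the stability constraint (\ref{OCP: stability}) and its definition (\ref{eqn: phi}) of $\phi_i(k^t_i)$, was constructed precisely so that the candidate sequence (\ref{eqn: feasible solution}) yields, by inequalities (\ref{eqn: diff H=1 ST}) and (\ref{eqn: diff H>1 ST}) summed over the open-loop horizon,
\begin{align*}
	H_i(e_i(k^{t+1}_i),\bar{\mathbf u}_i(k^{t+1}_i)) - H_i(e_i(k^t_i),\tilde{\mathbf u}^\star_i(k^t_i))
	\le -\!\!\sum_{\ell=0}^{m^t_i-1}\!\!\big(\sigma_i'\,\|e_i(k^t_i+\ell)\|^2_{Q_i} + \cdots\big),
\end{align*}
for some $\sigma_i' \in (0,1)$ depending on $\sigma_i$, so that after optimality $V_i(e_i(k^{t+1}_i),\tilde{\mathbf u}^\star_i(k^{t+1}_i)) \le H_i(e_i(k^{t+1}_i),\bar{\mathbf u}_i(k^{t+1}_i))$. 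Since the true plant (\ref{subsystem model}) carries the disturbance $d_i(k) \in \mathbb D_i$ with $\|d_i\|\le\eta_i$, and the analysis in Theorem \ref{theo: trigger} is on the nominal candidate, I would isolate the disturbance effect: the gap between the nominal candidate state and the realized state is bounded by the Gronwall term $\propto \eta_i$, and propagating this through $H_i$ (using the same $L_{Q_i}, L_{P_i}$ constants) produces an additive term of the form $\delta_i(\eta_i)$, a $\mathcal K$ function of the disturbance bound. Collecting terms gives $V_i(e_i(k^{t+1}_i)) - V_i(e_i(k^t_i)) \le -\alpha_i(\|e_i(k^t_i)\|) + \delta_i(\eta_i)$, which is exactly condition 2 along the triggering subsequence; standard ISS arguments (e.g.\ the comparison-function machinery of \cite{sun2019robust}) then extend the bound to all $k$ using the uniform bound $m^t_i \le N$.

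Finally, for the finite-time reaching claim: outside the terminal region, $e_i(k^t_i) \notin \Omega_i(\varepsilon_{ri})$ implies $\|e_i(k^t_i)\| \ge c_i > 0$ for some $c_i$ (by $\|e_i\|^2_{P_i} > \varepsilon_{ri}^2$), hence $\alpha_i(\|e_i(k^t_i)\|) \ge \alpha_i(c_i)$; choosing $\eta_i$ small enough that $\delta_i(\eta_i) < \alpha_i(c_i)$ — which is consistent with the bound on $\eta_i$ already imposed in Lemma \ref{lem: state} — forces a strict uniform decrease of $V_i$ by at least $\alpha_i(c_i) - \delta_i(\eta_i) > 0$ at every triggering instant while the state remains outside $\Omega_i(\varepsilon_{ri})$. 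Since $V_i \ge 0$, only finitely many such decreases can occur, so the state enters $\Omega_i(\varepsilon_{ri})$ after finitely many triggers (hence finite time, as $m^t_i\le N$), after which Step 3–4 of Algorithm \ref{alg1} applies the auxiliary law $K_i e_i$ and Assumption \ref{ass: terminal} keeps the nominal trajectory in the region, with the disturbed trajectory remaining in a neighborhood thereof; this yields the claimed ISS and completes the proof. The main obstacle I anticipate is the rigorous construction of the upper $\mathcal K_\infty$ bound $\bar\alpha_i$ on the optimal value function, since the OCP couples $e_i$ with the synchronization-parameter penalty term and one must argue this penalty is also dominated by a $\mathcal K_\infty$ function of $\|e_i(k^t_i)\|$ on the feasible set — or, alternatively, absorb it into the disturbance-like term using boundedness of the estimated neighbor sequence $\hat{\mathbf s}_{i,j}$.
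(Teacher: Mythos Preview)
Your approach is correct in spirit but organizes the argument differently from the paper. The paper does \emph{not} work along the triggering subsequence $\{k^t_i\}$; instead it verifies the ISS decrease of $V_i=H_i$ at \emph{every} time step $k\to k+1$ by a four-case analysis according to whether each of $k$ and $k+1$ is a triggered instant or an open-loop instant (the Lyapunov function at non-triggered instants is evaluated on the candidate $\bar{\mathbf u}_i$, at triggered instants on the optimal $\tilde{\mathbf u}^\star_i$). Cases where $k+1$ is not triggered are handled directly by (\ref{eqn: diff H=1 ST}) or (\ref{eqn: diff H>1 ST}); cases where $k+1$ \emph{is} triggered are closed using the stability constraint (\ref{OCP: stability}) and the definition (\ref{eqn: phi}) of $\phi_i$, which is the mechanism that replaces the ``optimality'' step you invoke. (Note that your line ``after optimality $V_i(\cdot,\tilde{\mathbf u}^\star)\le H_i(\cdot,\bar{\mathbf u})$'' is not literally correct, since the OCP minimizes $J_i$ rather than $H_i$; it is precisely the extra constraint (\ref{OCP: stability}) that enforces this bound, as you correctly hinted earlier.) Your subsequence-plus-extension route also works and is arguably cleaner, at the cost of the additional uniform-$m^t_i\le N$ interpolation step; the paper's per-step four-case split gives Definition~\ref{def: ISS} immediately at every $k$ without that extension. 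Finally, your worry about the synchronization penalty contaminating the upper $\mathcal K_\infty$ bound is largely moot: since $V_i=H_i$ excludes the coupling term, the upper bound reduces to bounding the stage cost of a feasible (e.g.\ candidate) sequence on a compact set, which the paper simply declares ``obvious.''
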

\begin{proof}
	The subsystem $i$ is ISS if it admits an ISS-Lyapunov function, refer to \cite{jiang2001input}. As shown in Theorem \ref{theo: trigger}, the stage cost $H_i$ is selected as a Lyapunov function candidate. It is obvious that this function satisfies the prior condition  of Definition \ref{def: ISS}. The second condition on Lyapunov function decreasing is discussed in the following four cases:	
	\begin{itemize}
		\item[i)] Instant $k+1$ is not triggered and instant $k$ is triggered. From the eqnarray (\ref{eqn: diff H=1 ST}) based on the triggering criteria, the following inequality is satisfied:
		\begin{align*}
			& V_i(e_i(k+1),\bar{\textbf{u}}_i(k+1)) - V_i(e_i(k), \tilde{\textbf{u}}^{\star}_i(k)) \\
			& \leq -\|e_i(k)\|^2_{Q_i} + \psi_i(1) \eta_i.
		\end{align*}
	\end{itemize}
	
	\begin{itemize}
		\item[ii)] Instant $k+1$ is not triggered and instant $k$ is not triggered. Then the latest updating instant is set as $[k^t_i] = \arg \min_{k^t_i \in \mathbb{N}_{[1,N]}} k-k^t_i$. From (\ref{eqn: diff H>1 ST}), we have
		\begin{align*}
			& V_i(e_i(k+1),\bar{\textbf{u}}_i(k+1)) - V_i(e_i(k),\bar{\textbf{u}}_i(k)) \\
			& \leq -\|e_i(k)\|^2_{Q_i} + \psi_i(k-[k^t_i]+1) \eta_i.
		\end{align*}
	\end{itemize}
	
	\begin{itemize}
		\item[iii)] Instant $k+1$ is triggered and instant $k$ is also triggered. From the stability constraint (\ref{OCP: stability}), the inequality concerning Lyapunov decreasing is the same as the first case.
	\end{itemize}
	
	\begin{itemize}
		\item[iv)] Instant $k+1$ is triggered and instant $k$ is not triggered. Assume that instant $k+1$ is the $t+1$-th triggering moment. Then the prior updating instant is $k^t_i = \arg \min_{k^t_i \in \mathbb{N}_{[2,N]}} k^{t+1}_i-k^t_i$. From (\ref{eqn: phi}), we have
		\begin{align*}
			& V_i(e_i(k+1), \tilde{\textbf{u}}^{\star}_i(k+1)) - V_i(e_i(k),\bar{\textbf{u}}_i(k)) \\
			& \leq -\|e_i(k)\|^2_{Q_i} + \psi_i(m^t_i) \eta_i.
		\end{align*}
	\end{itemize}
	
	In summary, there always exists a candidate Lyapunov function whose generation value is not greater than that of the previous step. Then, the subsystem $i$ is ISS and the error state trajectory will enter towards terminal region $\Omega_i(\varepsilon_{ri})$ in finite time by following the argument in \cite{michalska1993robust}. This completes the proof.
\end{proof}

%%%%%%%%%%%%%%%%%%%%%%%%%%%%%%%%%%%%%%%%%%%%%%%%%%%%%%%%%%%%%%%%%%%%%%%%%%%%%%%%%%%%%%%%%%%%%%%%%%%%%%%%%%%%%%%%%%%%%%%%%%%%%%%%%%%%%%%%%%%%%%%%%%%%%%%%%%%%%%%%%%%%%%%%%%%%%%%%%%%%%%%%%%%%%%%%%%%%%%%%%%%%%%%%%%%%%%%%%%%%%%%%%%%
\section{Illustrative Example}
In this section, we shall present an illustrative example to show the efficiency of the proposed Algorithm 1. Consider the formation application for three nonholonomic wheeled mobile robots \cite{qin2023event}, the nominal error model is discretized as
\begin{eqnarray*}
	x_{ei}(k+1) &\hspace{-0.6em}=\hspace{-0.6em}& x_{ei}(k) + T [w_i(k)y_{ei}(k) + v_{ei}(k) ], \\
	y_{ei}(k+1) &\hspace{-0.6em}=\hspace{-0.6em}& y_{ei}(k) + T [-w_i(k)x_{ei}(k) + v_{ri}(k) \sin \theta_{ei}(k) ], \\
	\theta_{ei}(k+1) &\hspace{-0.6em}=\hspace{-0.6em}& \theta_{ei}(k) + Tw_{ei}(k),
\end{eqnarray*}
where $x_{ei}(k)$, $y_{ei}(k)$ and $\theta_{ei}(k)$ are the lateral position error, longitudinal position error and angle error, respectively. The error state is selected as $e_i(k) = [x_{ei}(k)^T, y_{ei}(k)^T, \theta_{ei}(k)^T]^T$, and the control input is $u_i(k) = [v_{ei}(k)^T, w_{ei}(k)^T]^T$, where $v_{ei}(k) = v_{ri}(k) \cos \theta_{ei}(k) - v_i(k)$ and $v_{ei}(k)w_{ei}(k) = w_{ri}(k) - w_i(k)$ are the linear velocity error and angular velocity error, where $v_{ri}(k)$, $w_{ri}(k)$ are the reference linear velocity and angular velocity, $v_{i}(k)$, $w_{i}(k)$ are the actual ones. The reference path is $\Gamma_1 = [3\cos(0.1s_1(k))-2\cos(0.1s_2(k)), 3\sin(0.1s_1(k))-2\sin(0.1s_2(k))]$, $\Gamma_2 = [3\cos(0.1s_2(k)), 3\sin(0.1s_2(k))]$, $\Gamma_3 = [3\cos(0.1s_3(k))+2\cos(0.1s_2(k)+\pi/3), 3\sin(0.1s_3(k))+2\sin(0.1s_2(k)+\pi/3)]$. The relevant reference velocity can be obtained by differential flattening technique, and details refer to \cite{qin2020formation}. The optional matrices are $Y_i = [0, -1]$, $Z_i = [0, 1]$. The sampling period is set as $T=0.2s$. The subsystems are subject to the state error constraint $\mathbb{E}_i = \{ e_i : |x_{ei}|\leq 0.3, |y_{ei}|\leq 0.3, |\theta_{ei}|\leq \pi/10 \}$, the  control error constraint $\mathbb{U}_i = \{ u_i : |v_{ei}|\leq1, |w_{ei}|<1 \}$. Using our proposed Algorithm 1 and initializing parameter: $N=6$, $Q_i = 3I_3$, $R_i = 0.01I_2$, $\rho_{ij}=1$, $\varepsilon_i=0.05$, $\varepsilon_{ri}=0.06$. According to Lemmas 3.2 and 3.3 in \cite{KG2002Nonlinear}, the Lipschitz constants in Assumption \ref{ass: Lipschitz} are found as
\begin{eqnarray*}
	L_{g_1} = 0.24, \;	L_{g_2} = 0.28,	\; L_{g_3} = 0.3072, \\
	L_{\kappa_1} = 0.44, \; L_{\kappa_2} = 0.48, \; L_{\kappa_3} = 0.5072,
\end{eqnarray*}
the weighted matrices
\begin{eqnarray*}
	P_1 &\hspace{-0.6em}=\hspace{-0.6em}&
	\left[\begin{matrix}
		   12.2730 & \;\;2.8905 & \;\;2.9541 \\
		\;\;2.8905 &    12.3029 & \;\;3.0116 \\
		\;\;2.9541 & \;\;3.0116 &    11.3057
	\end{matrix}\right], \\
	P_2 &\hspace{-0.6em}=\hspace{-0.6em}&
	\left[\begin{matrix}
		   12.2801 & \;\;2.9043 & \;\;2.9509 \\
		\;\;2.9043 &    12.3029 & \;\;2.9942 \\
		\;\;2.9509 & \;\;2.9942 &    11.3019
\end{matrix}\right], \\
	P_3 &\hspace{-0.6em}=\hspace{-0.6em}&
	\left[\begin{matrix}
		   12.2998 & \;\;2.8734 & \;\;3.0342 \\
		\;\;2.8734 &    12.3216 & \;\;3.0751 \\
		\;\;3.0342 & \;\;3.0751 &    11.3673
\end{matrix}\right],
\end{eqnarray*}
and the auxiliary controller gain matrices are
\begin{eqnarray*}
	K_1 &\hspace{-0.6em}=\hspace{-0.6em}&
	\left[\begin{matrix}
		-1.9457 & -1.9725 & -1.9827 \\
		-1.9484 & -1.9753 & -1.9858
	\end{matrix}\right], \\
	K_2 &\hspace{-0.6em}=\hspace{-0.6em}&
	\left[\begin{matrix}
		-1.9389 & -1.9657 & -2.0032 \\
		-1.9438 & -1.9709 & -2.0101
	\end{matrix}\right], \\
	K_3 &\hspace{-0.6em}=\hspace{-0.6em}&
	\left[\begin{matrix}
		-1.9358 & -1.9626 & -2.0198 \\
		-1.9423 & -1.9695 & -2.0299
	\end{matrix}\right].
\end{eqnarray*}

From Lemma \ref{lem: state}, the allowable disturbance bound is calculated as $\eta_1 \leq 1.3163\exp{-4}$, $\eta_2 \leq 1.1748\exp{-4}$, $\eta_3 \leq 1.0963\exp{-4}$, the parameter $\mu_i = 1.02$, and the triggering factors $\sigma_1 = 0.04$, $\sigma_2 = 0.02$, $\sigma_3 = 0.04$. The initial state error is $e_1(0) = [-0.2, 0.2, -0.1]^T$, $e_2(0) = [0.1, 0.2, -0.1]^T$, $e_3(0) = [0.2, -0.2, -0.1]^T$. The simulation results under Algorithm 1 are shown in Figs. \ref{Fig: state} - \ref{Fig: trigger}.  The state error responses and control trajectories are depicted in Fig. \ref{Fig: state} and Fig. \ref{Fig: control}, respectively. It follows from Fig. \ref{Fig: syn} that the synchronization parameters are converged to zero. According to Fig. \ref{Fig: trigger}, the proposed self-triggered strategy reduces the solving frequency efficiently.  The movement paths of three robots are shown in Fig. \ref{Fig: movement}, which illustrates that the robots are forced to follow the desired path with inter-robots formation coordination.

\begin{figure}
	\includegraphics[width=8.7cm]{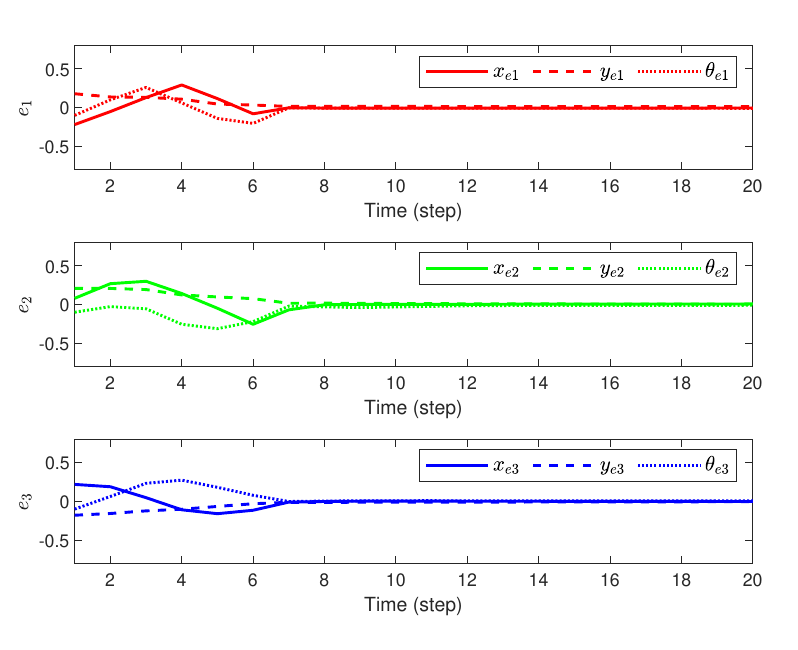}
	\caption{The state responses within 0-20 step.}
	\label{Fig: state}
\end{figure}

\begin{figure}
	\includegraphics[width=8.7cm]{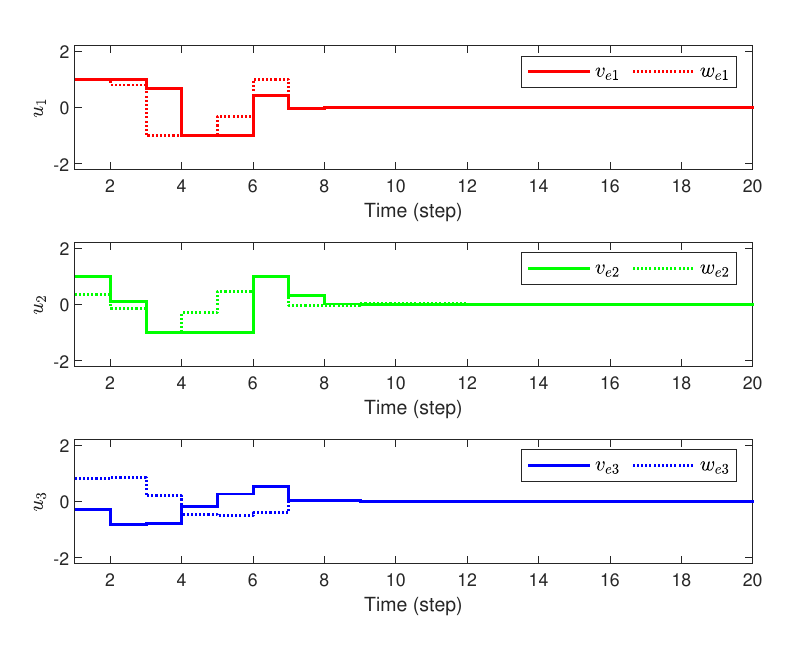}
	\caption{The control trajectories within 0-20 step.}
	\label{Fig: control}
\end{figure}

\begin{figure}
	\includegraphics[width=8.7cm]{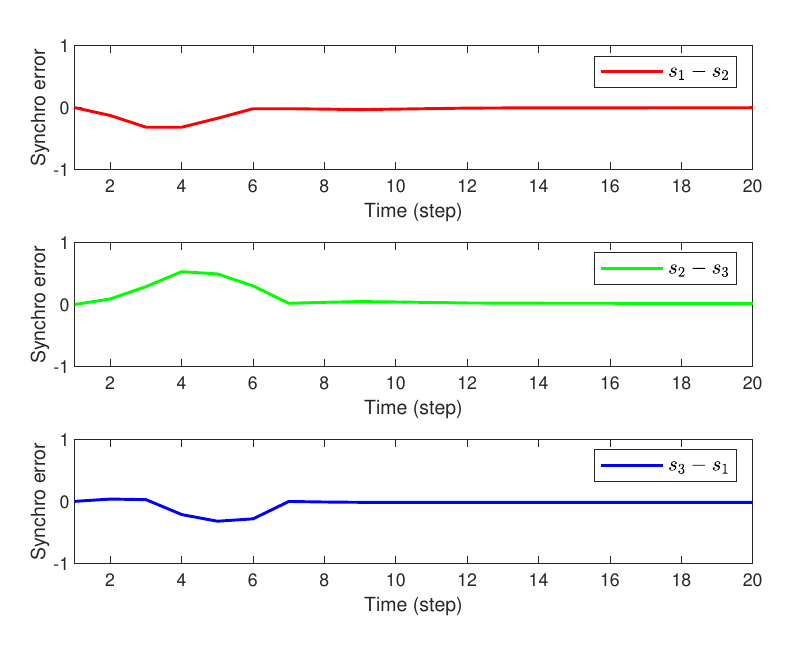}
	\caption{The error trajectories of the synchronization parameters within 0-20 step.}
	\label{Fig: syn}
\end{figure}

\begin{figure}
	\includegraphics[width=8.7cm]{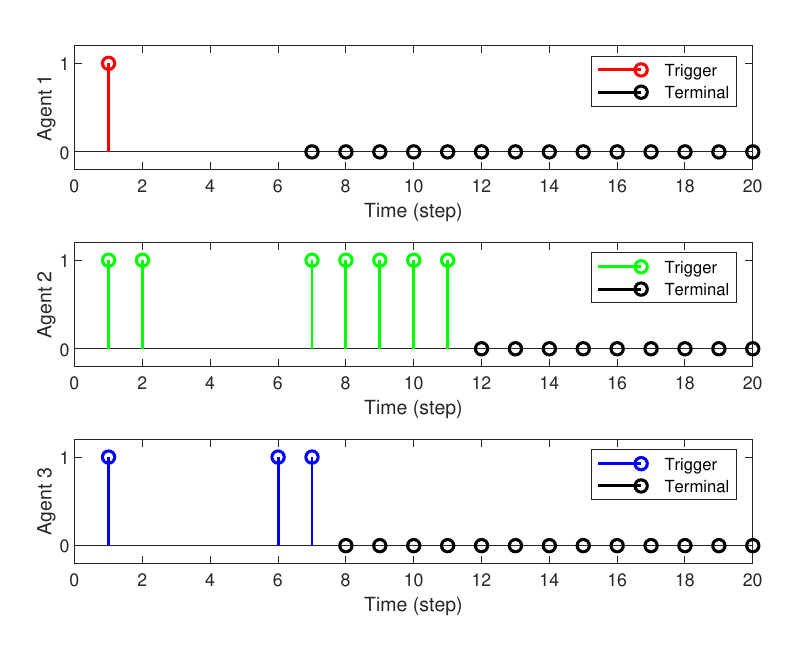}
	\caption{The triggering instants within 0-20 step.}
	\label{Fig: trigger}
\end{figure}

\begin{figure}
	\includegraphics[width=8.7cm]{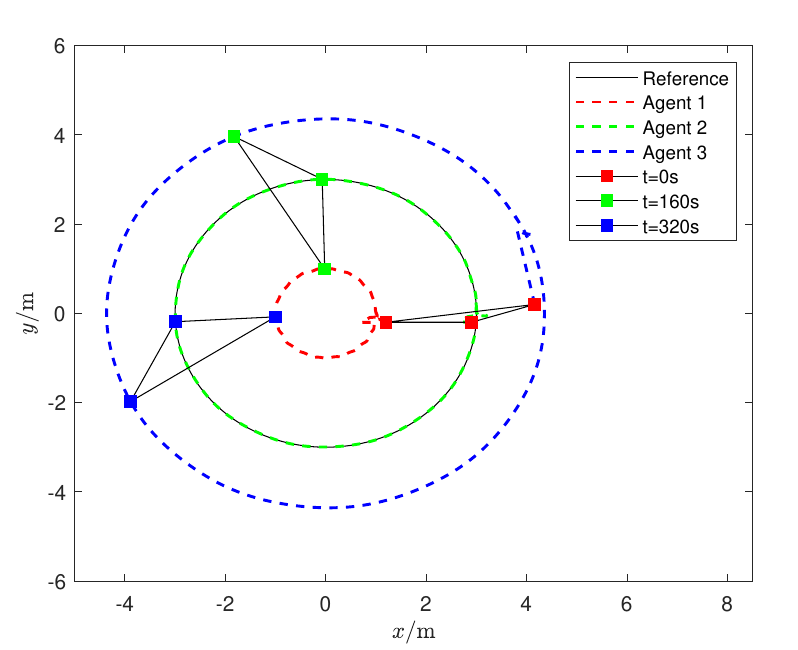}
	\caption{The movement paths of the robots.}
	\label{Fig: movement}
\end{figure}

%%%%%%%%%%%%%%%%%%%%%%%%%%%%%%%%%%%%%%%%%%%%%%%%%%%%%%%%%%%%%%%%%%%%%%%%%%%%%%%%%%%%%%%%%%%%%%%%%%%%%%%%%%%%%%%%%%%%%%%%%%%%%%%%%%%%%%%%%%%%%%%%%%%%%%%%%%%%%%%%%%%%%%%%%%%%%%%%%%%%%%%%%%%%%%%%%%%%%%%%%%%%%%%%%%%%%%%%%%%%%%%%%%%%%%%
\section{Conclusions}
This paper considers the problem of self-triggered DMPC approach for nonlinear MASs subject to bounded disturbances. The solving frequency is efficiently reduced through the novel self-triggered scheduler. On the other hand, the synchronization parameter constraints alleviate the interaction resource. By implementing the proposed algorithm, the balance of the performance level and resource usage for the closed-loop system are achieved. Rigorous analysis with respect to the recursive feasibility and stability is conducted. Finally, an illustrative example is presented to demonstrate the effectiveness of the proposed technique.

%\begin{appendices}
%\section{Proof of Lemma \ref{lem: state}}\label{proof of lemma state}
%\section{Proof of Theorem \ref{theo: feasibility}}\label{proof of theorem feasibility}
%\end{appendices}
%%%%%%%%%%%%%%%%%%%%%%%%%%%%%%%%%%%%%%%%%%%%%%%%%%%%%%%%%%%%%%%%%%%%%%%%%%%%%%%%%%%%%%%%%%%%%%%%%%%%%%%%%%%%%%%%%%%%%%%%%%%%%%%%%%%%%%%%%%%%%%%%%%%%%%%%%%%%%%%%%%%%%%%%%%%%%%%%%%%%%%%%%%%%%%%%%%%%%%%%%%%%%%%%%%%%%%%%%%%%%%%%%%
\bibliographystyle{ieeetr}
\bibliography{references}

\begin{thebibliography}{10}

\bibitem{yang2021distributed}
H.~Yang, T.~Li, Y.~Long, C.~P. Chen, and Y.~Xiao, ``Distributed virtual inertia
  implementation of multiple electric springs based on model predictive control
  in dc microgrids,'' {\em IEEE Transactions on Industrial Electronics},
  vol.~69, no.~12, pp.~13439--13450, 2021.

\bibitem{yin2020deep}
X.~Yin and X.~Zhao, ``Deep neural learning based distributed predictive control
  for offshore wind farm using high-fidelity les data,'' {\em IEEE Transactions
  on Industrial Electronics}, vol.~68, no.~4, pp.~3251--3261, 2020.

\bibitem{christofides2013distributed}
P.~D. Christofides, R.~Scattolini, D.~M. de~la Pena, and J.~Liu, ``Distributed
  model predictive control: A tutorial review and future research directions,''
  {\em Computers \& Chemical Engineering}, vol.~51, pp.~21--41, 2013.

\bibitem{zhou2022multiple}
X.~Zhou, T.~Yang, Y.~Zou, S.~Li, and H.~Fang, ``Multiple subformulae
  cooperative control for multiagent systems under conflicting signal temporal
  logic tasks,'' {\em IEEE Transactions on Industrial Electronics}, vol.~70,
  no.~9, pp.~9357--9367, 2022.

\bibitem{wang2023distributed}
J.~Wang and S.~Li, ``Distributed model predictive control for consensus of
  multi-agent systems with connectivity maintenance,'' {\em IEEE Transactions
  on Circuits and Systems I: Regular Papers}, 2023.

\bibitem{yuan2023distributed}
Q.~Yuan and X.~Li, ``Distributed model predictive formation control for a group
  of uavs with spatial kinematics and unidirectional data transmissions,'' {\em
  IEEE Transactions on Network Science and Engineering}, 2023.

\bibitem{hao2023trajectory}
L.~Hao, R.~Wang, C.~Shen, and Y.~Shi, ``Trajectory tracking control of
  autonomous underwater vehicles using improved tube-based model predictive
  control approach,'' {\em IEEE Transactions on Industrial Informatics}, 2023.

\bibitem{hou2021distributed}
B.~Hou, S.~Li, and Y.~Zheng, ``Distributed model predictive control for
  reconfigurable systems with network connection,'' {\em IEEE Transactions on
  Automation Science and Engineering}, vol.~19, no.~2, pp.~907--918, 2021.

\bibitem{zheng2023distributed}
Y.~Zheng, S.~Li, R.~Wan, Z.~Wu, and Y.~Zhang, ``Distributed model predictive
  control for reconfigurable systems based on lyapunov analysis,'' {\em Journal
  of Process Control}, vol.~123, pp.~1--11, 2023.

\bibitem{yang2021economic}
Y.~Yang, Y.~Zou, and S.~Li, ``Economic model predictive control of enhanced
  operation performance for industrial hierarchical systems,'' {\em IEEE
  Transactions on Industrial Electronics}, vol.~69, no.~6, pp.~6080--6089,
  2021.

\bibitem{mi2019self}
X.~Mi, Y.~Zou, S.~Li, and H.~R. Karimi, ``Self-triggered dmpc design for
  cooperative multiagent systems,'' {\em IEEE Transactions on Industrial
  Electronics}, vol.~67, no.~1, pp.~512--520, 2019.

\bibitem{chen2023asynchronous}
J.~Chen, H.~Wei, H.~Zhang, and Y.~Shi, ``Asynchronous self-triggered stochastic
  distributed mpc for cooperative vehicle platooning over vehicular ad-hoc
  networks,'' {\em IEEE Transactions on Vehicular Technology}, 2023.

\bibitem{wei2021self}
H.~Wei, K.~Zhang, and Y.~Shi, ``Self-triggered min--max dmpc for asynchronous
  multiagent systems with communication delays,'' {\em IEEE Transactions on
  Industrial Informatics}, vol.~18, no.~10, pp.~6809--6817, 2021.

\bibitem{wang2024rolling}
T.~Wang, Y.~Kang, P.~Li, Y.~Zhao, and H.~Tang, ``Rolling self-triggered
  distributed mpc for dynamically coupled nonlinear systems,'' {\em
  Automatica}, vol.~160, p.~111444, 2024.

\bibitem{skjetne2004robust}
R.~Skjetne, T.~I. Fossen, and P.~V. Kokotovi{\'c}, ``Robust output maneuvering
  for a class of nonlinear systems,'' {\em Automatica}, vol.~40, no.~3,
  pp.~373--383, 2004.

\bibitem{zhang2012distributed}
Q.~Zhang, L.~Lapierre, and X.~Xiang, ``Distributed control of coordinated path
  tracking for networked nonholonomic mobile vehicles,'' {\em IEEE Transactions
  on Industrial Informatics}, vol.~9, no.~1, pp.~472--484, 2012.

\bibitem{qin2023event}
D.~Qin, Z.~Jin, A.~Liu, W.-a. Zhang, and L.~Yu, ``Event-triggered distributed
  predictive cooperation control for multi-agent systems subject to bounded
  disturbances,'' {\em Automatica}, vol.~157, p.~111230, 2023.

\bibitem{qin2023asynchronous}
D.~Qin, Z.~Jin, A.~Liu, W.-A. Zhang, and L.~Yu, ``Asynchronous event-triggered
  distributed predictive control for multi-agent systems with parameterized
  synchronization constraints,'' {\em IEEE Transactions on Automatic Control},
  2023.

\bibitem{sun2019robust}
Z.~Sun, L.~Dai, K.~Liu, V.~Dimarogonas, Dimos, and Y.~Xia, ``Robust
  self-triggered mpc with adaptive prediction horizon for perturbed nonlinear
  systems,'' {\em IEEE Transactions of Automatic Control}, vol.~11, no.~64,
  pp.~4780--4787, 2019.

\bibitem{xie2021robust}
H.~Xie, L.~Dai, Y.~Luo, and Y.~Xia, ``Robust mpc for disturbed nonlinear
  discrete-time systems via a composite self-triggered scheme,'' {\em
  Automatica}, vol.~127, p.~109499, 2021.

\bibitem{paula2006analysis}
R.~Paulavičius and J.~Žilinskas, ``Analysis of different norms and
  corresponding lipschitz constants for global optimization,'' {\em
  Technological and Economic Development of Economy}, vol.~12, no.~4,
  pp.~301--306, 2006.

\bibitem{jiang2001input}
Z.~Jiang and Y.~Wang, ``Input-to-state stability for discrete-time nonlinear
  systems,'' {\em Automatica}, vol.~37, no.~6, pp.~857--869, 2001.

\bibitem{michalska1993robust}
H.~Michalska and D.~Q. Mayne, ``Robust receding horizon control of constrained
  nonlinear systems,'' {\em IEEE Transactions on Automatic Control}, vol.~38,
  no.~11, pp.~1623--1633, 1993.

\bibitem{qin2020formation}
D.~Qin, A.~Liu, D.~Zhang, and H.~Ni, ``Formation control of mobile robot
  systems incorporating primal-dual neural network and distributed predictive
  approach,'' {\em Journal of the Franklin Institute}, vol.~357, no.~17,
  pp.~12454--12472, 2020.

\bibitem{KG2002Nonlinear}
H.~K. Khalil and J.~W. Grizzle, {\em Nonlinear Systems}.
\newblock Prentice hall, 2002.

\end{thebibliography}
\end{document}